\newtheorem{lemma}{Lemma}
\newtheorem{proposition}{Proposition}
\begin{document}
\interfootnotelinepenalty=10000

\title{Hyperdense coding and superadditivity of classical capacities\\
in hypersphere theories}

\author{Serge Massar}
\author{Stefano Pironio}
\author{Dami\'an Pital\'ua-Garc\'ia}
\affiliation{Laboratoire d'Information Quantique, CP 224, Universit\'e libre de Bruxelles,
Av. F. D. Roosevelt 50, 1050 Bruxelles, Belgium}

\begin{abstract}
In quantum superdense coding, two parties previously sharing entanglement can communicate a two bit message by sending a single qubit. We study this feature in the broader framework of general probabilistic theories. We consider a particular class of theories in which the local state space of the communicating parties corresponds to Euclidean hyperballs of dimension $n$ (the case $n=3$ corresponds to the Bloch ball of quantum theory). We show that a single $n$-ball can encode at most one bit of information, independently of $n$. We introduce a bipartite extension of such theories for which there exist dense coding protocols such that $\log_2(n+1)$ bits are communicated if entanglement is previously shared by the communicating parties. For $n>3$, these protocols are more powerful than the quantum one, because more than two bits are communicated by transmission of a system that locally encodes at most one bit. We call this phenomenon hyperdense coding. Our hyperdense coding protocols imply superadditive classical capacities: two entangled systems can encode $\log_2(n+1)>2$ bits, even though each system individually encodes at most one bit. In our examples, hyperdense coding and superadditivity of classical capacities come at the expense of violating tomographic locality or dynamical continuous reversibility. 
\end{abstract}

\date{\today}
\maketitle


\section{Introduction}

Classical information can be encoded in quantum systems and reliably recovered. One of the founding results in quantum information science \cite{NielsenandChuangbook} is the Holevo theorem \cite{K73}. It implies that, fundamentally, the classical capacity of $N$ qubits is $N$ bits: $N$ qubits can perfectly encode $N$ classical bits, but no more. It follows that classical capacities are additive in quantum theory. Though this seems quite natural -- it would be strange that by combining two systems that locally store one bit more than two bits could be encoded -- there exist, due to quantum entanglement, quantum channels whose capacities to communicate classical \cite{H09} or quantum \cite{SY08} information can be superadditive.

Entanglement is also responsible for other counter-intuitive aspects of the communication properties of quantum systems. In particular, it is at the basis of superdense coding \cite{sdc}, one of the fundamental protocols of quantum information theory. Though one qubit can locally encode at most one classical bit, in a superdense coding protocol a two bit message can, surprisingly, be communicated by the transmission of a single qubit with the aid of previously shared entanglement.

Though the use of entanglement therefore provides an advantage over purely local protocols, it has a limited communication power because no more than two bits can be communicated by transmission of a qubit independently of the amount of entanglement that the communicating parties share. The amount of quantum information that a transmitted quantum system can communicate is fundamentally limited by its Hilbert space dimension \cite{QIC}. This guarantees in particular that no more than $2N$ bits can be communicated in quantum superdense coding using a pair of entangled systems whose individual classical capacity is $N$ bits. A hypothetical protocol violating this quantum limit, which we denote a $\emph{hyperdense coding}$ protocol in the following, would imply a violation of the additivity property mentioned above.

Our motivation in this paper is to understand better the physical principles underlying the limited power of quantum superdense coding and the additivity of classical capacities. Could one conceive theories with hyperdense coding and superadditive classical capacities? How would such theories differ from quantum theory? What physical principles would they violate that are obeyed by quantum theory? Can the additivity of classical capacities and the inexistence of hyperdense coding be considered as conditions for physically sensible theories?

We address partially these questions in the framework of Generalised Probabilistic Theories (GPTs). These were introduced several decades ago~\cite{Mackey63,DL70,E70,Foulis81,Ludwig85}, and contain classical and quantum theory as special cases. This mathematical framework provides a basis for studying probabilistic theories that may have non-classical properties, and for formulating ``natural'' physical axioms for quantum theory \cite{H01,DB09,Chiribella10,Chiribella11,MM11,H11,H13,MMAP13}. The information properties of GPTs have been extensively studied \cite{B07,BBLW07,BBLW08,BBCLSSWW10,SW10,JGBB11,JL13,BW13,FMPT13,MP14,CHHH14,BKLS14,LB15}.

We consider a particular class of GPTs in which the state space of the communicating systems corresponds to an Euclidean ball of arbitrary dimension $n\in\mathbb{N}$. These \emph{hypersphere theories}, as we call them here, have been previously investigated \cite{U96,Z08,DB09,PDB10,MM11,PW12,MMPA14,MM13,MMAP13}. They have important physical motivations. The cases $n=1$ and $n=3$ correspond locally to the classical and the quantum bit, respectively. The local state space of a generalized bit can be deduced to be an $n-$ball from physically motivated axioms \cite{MMAP13}, which include a weak version of information causality \cite{ic}. 

We show that for arbitrary $n$, a single $n-$ball can  encode at most one bit of information. 
We then show that there exists a bipartite extension of hypersphere theories that includes entangled states
for which there exists a superdense coding protocol whose communication capacity is $\log_2 {(n+1)}$ bits. 
Thus, for $n>3$, this protocol is more powerful than the quantum one: it is hyperdense coding. Our hyperdense coding protocol imply that by using entangled states, two hypersphere systems can be used to encode $\log_2 {(n+1)}$ bits, thus achieving superadditive classical capacities when $n>3$.

We then turn to the interpretation of these results. We find a connection between superadditivity of classical capacities and hyperdense coding with two of the physical conditions imposed on GPTs in various derivations of finite dimensional quantum theory: continuous reversibility and tomographic locality \cite{H01,DB09,MM11,MMAP13}. 

This paper is organised as follows. We briefly introduce in Section~\ref{secgpt} the formalism of generalised probabilistic theories and in Section~\ref{secht} the specific class of hypersphere theories that we consider here. In Section~\ref{seccomm}, we investigate different communication scenarios in the context of these theories. We introduce a protocol for hyperdense coding, which implies the superadditivity of classical capacities, as well as for teleportation and entanglement swapping. We conclude with a discussion of the physical implications of our results. In particular, we present in the Appendices several additional results relating the (in)existence of hyperdense coding and superadditivity of classical capacities with physical properties such as tomographic locality and dynamical reversibility. 

\section{General Probabilistic Theories}\label{secgpt}
\subsection{States and effects}
In general probabilistic theories (GPTs), the space of unnormalised states is a proper cone $\mathcal{C}\subset\mathbb{R}^{n+1}$. The space of unnormalised effects is the dual cone of $\mathcal{C}$: $\mathcal{C}^*\equiv \lbrace e\in\mathbb{R}^{n+1} \vert  e\cdot\omega\geq 0, ~\forall \omega\in \mathcal{C}\rbrace$. The unit effect $u$ is an interior point of $\mathcal{C}^*$. A measurement is a set of effects that sum to the unit: $M=\{e_i\in \mathcal{C}^* \vert \sum_i e_i =u\}$. An unnormalised state $\omega\in\mathcal{C}$ has positive scalar product with the unit effect, $u\cdot \omega \geq 0$, while a normalised state $\omega \in \mathcal{C}$  has unit scalar product with the unit effect, $ u\cdot \omega  =1$. Given normalised state $\omega$ and measurement $M$, the probability of outcome $i$ is $p(i\vert M \omega)=  e_i\cdot \omega$, where `$\cdot$' denotes the Euclidean inner product. 

For any GPT, without loss of generality we can take the unit to have the form $u=\bigl(\begin{smallmatrix}
1\\ \bold{0}
\end{smallmatrix} \bigr)$, where $\bold{0}$ is the null vector in $\mathbb{R}^n$. The space of normalised states can then be written as
$\Omega\equiv \lbrace \omega_{r} \equiv \bigl(\begin{smallmatrix}
1\\ r
\end{smallmatrix} \bigr) \vert r\in \mathcal{R}\subset\mathbb{R}^n \rbrace $ with $\mathcal{R}$ a convex set. It is the convex hull of its extremal points, the pure states, which cannot be expressed as convex combinations of other states. The states in $\Omega$ that are not pure are called mixed. 

The unnormalised states can be expressed as $\omega'=\lambda\omega$ with $\lambda\geq 0$ and $\omega\in\Omega$ being a normalised state. The zero state is the null vector $\bold{0}\in\mathbb{R}^{n+1}$. The space of physical states is often defined as the set of unnormalised states $\omega'$ such that $u\cdot \omega'\leq 1$, which is the convex hull of the zero state and the set of normalised states $\Omega$. An unnormalised state corresponds to a preparation in which a particular outcome is obtained with a probability smaller than $1$. By considering all the outcomes obtained in a state preparation, we can consider that a state is always prepared with unit probability, hence, we can restrict the space of physical states to the set of normalised states $\Omega$. Here, we only consider normalised states $\Omega$.

The space of physically realisable effects will be noted $\mathcal{E}$. Any measurement corresponds to a set of  effects belonging to $\mathcal{E}$. We assume here that the set of all normalised effects is observable, that is, $\mathcal{E} = \lbrace e\in\mathcal{C}^* \vert   e\cdot \omega \leq 1, ~\forall \omega\in\Omega \rbrace$, which is convex. It can be argued that this does not need to hold in general, that $\mathcal{E}$ could as well be restricted to a proper subset of the normalised effects \cite{JL13}.

\subsection{Bipartite systems}

Let $\Omega_{A}\subset\mathbb{R}^{n_A+1}$, $\Omega_{B}\subset\mathbb{R}^{n_B+1}$ be the state spaces of the systems $A$ and $B$, and $\Omega_{AB}$ their joint state space. In general, $\Omega_{AB}$ can be defined arbitrarily. Unless otherwise stated, in the rest of this work we assume that a state $\phi\in\Omega_{AB}$ defines the outcome probabilities for each pair of effects $e_A \in \mathcal{E}_A$ and $e_B\in \mathcal{E}_B$, and that it satisfies two natural physical conditions: the no-signalling principle and tomographic locality \cite{H01,B07,BBLW07}. The \emph{no-signalling principle} states that the outcome probabilities for any measurement performed on the system $A$ are independent of what measurement is performed on the system $B$ and vice versa. \emph{Tomographic locality} states that all states $\phi\in\Omega_{AB}$ are characterized by the outcome probabilities of local measurements performed on $A$ and $B$.

It follows \cite{B07,BBLW07} from these conditions that $\Omega_{AB}\subset\mathbb{R}^{n_A+1}\otimes\mathbb{R}^{n_B+1}$ and that $\Omega_A\otimes_{\text{min}}\Omega_B\subseteq\Omega_{AB}\subseteq \Omega_A\otimes_{\text{max}}\Omega_B$, where $\Omega_A\otimes_{\text{min}}\Omega_B \equiv\text{convex hull}\{\omega_A\otimes\omega_B\vert \omega_A\in\Omega_A,\omega_B\in\Omega_B\}$ is the \emph{minimal tensor product} and  $\Omega_A\otimes_{\text{max}}\Omega_B \equiv\{\phi\vert (u_{A}\otimes u_B)\cdot\phi=1, (e_A\otimes e_B)\cdot\phi\geq0~ \forall e_A\in\mathcal{E}_A,e_B\in\mathcal{E}_B\}$ is the \emph{maximal tensor product}. Similar definitions can be given for the space of effects: $\mathcal{E}_A\otimes_{\text{min}}\mathcal{E}_B \equiv\text{convex hull}\{e_A\otimes e_B\vert e_A\in\mathcal{E}_A,e_B\in\mathcal{E}_B\}$ and  $\mathcal{E}_A\otimes_{\text{max}}\mathcal{E}_B \equiv\{E\vert  0\leq E\cdot(\omega_A\otimes \omega_B)\leq 1 ~ \forall \omega_A\in\Omega_A,\omega_B\in\Omega_B\}$. The minimal tensor product $\Omega_A\otimes_{\text{min}}\Omega_B$ includes all the separable states but does not contain any entangled states. The maximal tensor product $\Omega_A\otimes_{\text{max}}\Omega_B$ contains all states that are consistent with the no-signalling principle and that give valid probabilities to all local measurements. The states in $\Omega_A\otimes_{\text{max}}\Omega_B$ that are not in $\Omega_A\otimes_{\text{min}}\Omega_B$ are entangled.  If $A$ or $B$ is classical then $\Omega_{AB}=\Omega_A\otimes_{\text{min}}\Omega_B=\Omega_A\otimes_{\text{max}}\Omega_B$, while for quantum theory $\Omega_A\otimes_{\text{min}}\Omega_B\subset \Omega_{AB}\subset \Omega_A\otimes_{\text{max}}\Omega_B$ \cite{BBLW07}.

We define the unit effect of $\Omega_{AB}$ as $u_{AB}\equiv u_A\otimes u_B$. It follows that all states $\phi\in\Omega_{AB}$ can be expressed as matrices
\begin{equation}
\label{eq:011}
\phi\equiv 
 \begin{pmatrix}
1 & b^{\text{t}} \\
 a  & C
 \end{pmatrix},
\end{equation}
where 
$a\in\mathbb{R}^{n_A}$, $b\in\mathbb{R}^{n_B}$ and $C\in\mathbb{R}^{n_A}\otimes\mathbb{R}^{n_B}$. Similarly, all effects $E\in\mathcal{E}_{AB}$ can be expressed as matrices
\begin{equation}
\label{eq:012}
 E=\begin{pmatrix}
 \gamma& \beta^{\text{t}} \\
 \alpha  & \Gamma
 \end{pmatrix},
\end{equation}
where 
$\gamma\in\mathbb{R}$, $\alpha\in\mathbb{R}^{n_A}$, $\beta\in\mathbb{R}^{n_B}$ and $\Gamma\in\mathbb{R}^{n_A}\otimes\mathbb{R}^{n_B}$. The probability of obtaining outcome given by effect $E$ if the state is $\phi$ is
$E\cdot\phi=\text{Tr}(E^{\text{t}}\phi)$, where `t' denotes transposition. The unit effect $u_{AB}$ is a matrix (\ref{eq:012}) with $\gamma=1$ and all other entries zero. The reduced states $\omega_A\equiv\phi u_B=\bigl(\begin{smallmatrix}
1\\ a
\end{smallmatrix} \bigr)$ and $\omega_B\equiv\phi^\text{t}u_A=\bigl(\begin{smallmatrix}
1\\ b
\end{smallmatrix} \bigr)$ must satisfy $\omega_A\in\Omega_A$ and $\omega_B\in\Omega_B$. Thus, it is required that $a\in\mathcal{R}_A$ and $b\in\mathcal{R}_B$.

\subsection{Dynamics}

A GPT specifies the state spaces $\Omega_A$, $\Omega_B$ and $\Omega_{AB}$ for all physical systems $A$ and $B$. It also specifies the set of allowed transformations on the state spaces. We adopt the following consistency condition as the minimal physical condition that the allowed transformations must satisfy \cite{B07}. The set of allowed transformations $\mathcal{T}_A$ on system $A$ is \emph{consistent} if for every $T_A\in\mathcal{T}_A$, we have $T_A:\Omega_A\to\Omega_A$ and $T_A\otimes I_B:\Omega_{AB}\to\Omega_{AB}$, where $\Omega_{AB}$ is the joint state space of $A$ and any system $B$, and $I_{B}$ is the identity map on $B$. 

In general, the allowed transformations can be represented as linear maps \cite{B07}. Thus, we express any transformation $T_A\in\mathcal{T}_A$ as a matrix in $\mathbb{R}^{n_A+1}\otimes\mathbb{R}^{n_A+1}$.

We notice that according to the conditions given above, the allowed transformations must be normalisation-preserving, that is, normalised states are transformed into normalised states. It is often considered that transformations $T$ can output an unnormalised state $\omega'=T\omega$, with $\omega\in\Omega$ such that $0\leq u\cdot \omega'<1$. These are normalisation-decreasing transformations and correspond to obtaining an outcome with probability less than $1$. However, by considering all the possible outcomes, the transformations reduce to normalisation-preserving ones. Here, we only consider normalisation-preserving transformations, as given by the consistency condition.

\section{Hypersphere theories}\label{secht}

We introduce in this section a family of GPTs that we call hypersphere theories. These
theories have been studied before \cite{U96,Z08,DB09,PDB10,MM11,PW12,MMAP13,MMPA14,MM13}.
The state space of single systems in \emph{hypersphere theories} (HSTs) is defined as the unit ball of dimension $n$, $\Omega\equiv \bigl\{\omega_r\equiv \bigl(\begin{smallmatrix}
1\\ r
\end{smallmatrix} \bigr) \big\vert r\in \mathcal{R}\bigr\}$ with $\mathcal{R}\equiv\bigl\{ r \in \mathbb{R}^n \big\vert \lVert  r \rVert \leq 1\bigr\}$, for which the set of pure states $\lVert r \rVert =1$ defines a hypersphere. The extremal effects are $e_m=\frac{1}{2} \bigl(\begin{smallmatrix}
1\\ m
\end{smallmatrix} \bigr)$ with $m\in\mathbb{R}^n$ and $\lVert m\rVert =1$, defining also a hypersphere. Each unit vector $\lVert m \rVert =1$ defines a canonical measurement: $M_m=\{e_ m,e_{- m}\}$.

\subsection{Bipartite systems}
We now introduce a particular extension of HSTs to two systems that, as we show in the next section, has hyperdense coding, superadditive classical capacities, teleportation and entanglement swapping. We leave it as an open question to investigate other possible extensions of HSTs, in bipartite and multipartite settings, and their corresponding communication properties.

Before doing so, we introduce a useful algebraic structure (based on Hadamard transformations). Let $\mu\equiv(\mu_0,\mu_1,\ldots,\mu_{N-1})\in\lbrace 0,1\rbrace^N$ and $\nu\equiv(\nu_0,\nu_1,\ldots,\nu_{N-1})\in\lbrace 0,1\rbrace^N$ be $N$ bit strings. We define $2^N$ vectors $d_{\mu}\in\mathbb{R}^{2^N}$, labelled by $\mu$, whose components, labelled by $\nu$, are
\begin{equation}
\label{eq:43.0}
(d_{\mu})_{\nu}=(-1)^{\mu\cdot\nu},
\end{equation}
where $\mu\cdot \nu=\oplus_{l=0}^{N-1}\mu_l\nu_l$, with $\oplus$ denoting sum modulo $2$ (we note that $(d_{\mu})_{\nu}$ is the $\mu,\nu$ entry of a real Hadamard matrix). We define this set of vectors as 
$\mathcal{D}_N\equiv \bigl\{d_{\mu}\}_{\mu\in\{0,1\}^N}$.

It is easy to see that $(d_{\bold{0}})_{\nu}=(d_\mu)_{\bold{0}}=1$ for all $\mu,\nu\in\{0,1\}^N$, where $\bold{0}\equiv(0,0,\ldots,0)$ is the string with $N$ zero entries. Additionally, for $\mu\neq\bold{0}$, we have that $d_{\mu}$ has $2^{N-1}$ entries equal to $-1$, and the other $2^{N-1}$ entries equal to $+1$. To see this, suppose that $\mu$ has exactly $k$ entries equal to $1$, with $1\leq k \leq N$, in positions $l_1,\l_2,\ldots,l_k$. Thus, we have $\mu_{l_1}=\mu_{l_2}=\cdots=\mu_{l_k}=1$, with all other entries equal to zero. It follows from (\ref{eq:43.0}) that $(d_\mu)_\nu=-1$ iff $\oplus_{i=1}^{k}\nu_{l_i}=1$, where the sum is modulo two, which holds for exactly half of the possible values of $\nu$, that is, for $2^{N-1}$ cases. Thus, $d_{\mu}$ has an even number of $-1$ entries for $N\geq 2$ and $\mu\neq \bold{0}$.

Furthermore, it follows easily from (\ref{eq:43.0}) that $\mathcal{D}_N$ is closed under the element-wise product:
\begin{equation}
\label{eq:43.01}
d_\mu\circ d_{\mu'}=d_{\mu\oplus\mu'}\in\mathcal{D}_N,
\end{equation}
for all $\mu,\mu'\in\{0,1\}^N$. That is, $(d_\mu\circ d_{\mu'})_{\nu}\equiv (d_\mu)_{\nu} (d_{\mu'})_{\nu} =(d_{\mu\oplus\mu'})_\nu$ for all $\mu,\mu',\nu\in\{0,1\}^N$, where $\mu\oplus\mu'\equiv(\mu_0\oplus \mu_0',\mu_1\oplus \mu_1',\ldots,\mu_{N-1}\oplus \mu_{N-1}')$.
Thus $\mathcal{D}_N$ is a group under the element-wise product, with $d_{\bold{0}}$ being the identity element. Moreover, the elements of $\mathcal{D}_N$ satisfy:
\begin{eqnarray}
\label{eq:43.2}
\sum_{{\mu}\in\{0,1\}^N} (d_{\mu})_{\nu}&=&2^N \delta_{\nu,\bold{0}},\\
\label{eq:43.3}
d_\mu\cdot d_{\mu'} \equiv\sum_\nu (d_{\mu})_{\nu}(d_{\mu'})_{\nu} &=&2^N \delta_{\mu,\mu'}.
\end{eqnarray}

Let us now define the set $\Omega_{AB}$ of bipartite states for systems $A$ and $B$ that are locally described by HST systems of dimension $n=2^N-1$. We take 
$$\Omega_{AB}
\equiv\text{convex hull }\{\Omega_A\otimes_{\text{min}}\Omega_B,\mathcal{S}_N\}$$ to be the convex hull of the product states and the discrete set of entangled states $\mathcal{S}_N\equiv\{\phi_\mu\}_{\mu\in\{0,1\}^N}$ defined as the set of diagonal matrices
\begin{equation}
\label{eq:43.310}
\phi_\mu\equiv \text{diag}(d_\mu),
\end{equation}
whose entries are given by $(\phi_\mu)_{\nu,\nu'}\equiv\delta_{\nu,\nu'}(d_{\mu})_{\nu}$, where $\nu,\nu'\in\{0,1\}^N$.
The corresponding reduced states for systems $A$ and $B$ are the completely ``mixed" states $\omega_A\equiv\phi_{\mu} u_B=\bigl(\begin{smallmatrix}
1\\ \bold{0}
\end{smallmatrix} \bigr)\in\Omega_A$ and $\omega_B\equiv\phi_\mu^\text{t}u_A=\bigl(\begin{smallmatrix}
1\\ \bold{0}
\end{smallmatrix} \bigr)\in\Omega_B$, with $\bold{0}$ being the null vector in $\mathbb{R}^{2^N-1}$.

We introduce a discrete set of entangled effects $\mathcal{F}_N\equiv\{E_\mu\}_{\mu\in\{0,1\}^N}$, through
\begin{equation}
\label{eq:43.FN}
E_\mu\equiv 2^{-N} \phi_\mu.
\end{equation}
This set defines a measurement 
\begin{equation}
\label{eq:43.M}
M=\{E_\mu\}_{ \mu\in\{0,1\}^N},
\end{equation}
because 
\begin{equation}
\label{eq:lasteqever}
\sum_{\mu\in\{0,1\}^N} E_\mu=\text{diag}(1,\bold{0})=u_{AB},
\end{equation}
as follows from (\ref{eq:43.2}), (\ref{eq:43.310}) and (\ref{eq:43.FN}).
The probability to get outcome $\mu$ when measuring the entangled state $\phi_{\mu'}$ is
\begin{equation}
\label{eq:43.33}
E_\mu\cdot\phi_{\mu'}=2^{-N}d_\mu\cdot d_{\mu'}=\delta_{\mu,\mu'},
\end{equation}
as can be deduced from (\ref{eq:43.3}) -- (\ref{eq:43.FN}).

Finally, we define the allowed local transformations as the convex hull of the discrete set $\mathcal{T}_N\equiv\{T_\mu\}_{\mu\in\{0,1\}^N}$ where
\begin{equation}
\label{eq:43.51}
T_\mu\equiv \phi_\mu.
\end{equation}
Note that $T_{\bold{0}}=I$ is the identity in $\mathbb{R}^{2^N}$.
Furthermore, from (\ref{eq:43.01}), (\ref{eq:43.310}) and (\ref{eq:43.51}), we have
\begin{equation}
\label{eq:43.52}
T_{\mu'}T_{\mu}=T_{\mu\oplus\mu'}\in\mathcal{T}_N,
\end{equation}
for all $\mu,\mu'\in\{0,1\}^N$. Thus, we have that $\mathcal{T}_N$ is a group under matrix multiplication. Since the vectors $d_\mu$ have only $\pm 1$ entries with an even number of $-1$ entries for $N\geq 2$, it follows from (\ref{eq:43.310}) and (\ref{eq:43.51}) that $\mathcal{T}_N$ is a subgroup of $\text{SO}(2^N)$ for $N\geq 2$.

Since $(T_\mu)_{\bold{0},\bold{0}}=(d_\mu)_{\bold{0}}=1$ for all $\mu\in\{0,1\}^N$, we can express the matrices $T_{\mu}$ as
\begin{equation}
\label{eq:43.53}
T_{\mu} = \begin{pmatrix}
1 & \bold{0} \\
 \bold{0}  & \hat{T}_\mu
 \end{pmatrix}.
\end{equation}
 It follows that $\hat{\mathcal{T}}_N\equiv \{\hat{T}_{\mu}\}_{\mu\in\{0,1\}^N}$ is a subgroup of $\text{SO}(2^N-1)$ for $N\geq 2$, and $\hat{\mathcal{T}}_1=\text{O}(1)$.

\subsection{Consistency of the above definitions}
We show that the above definitions make sense in the GPT framework considered here, i.e., that the set $\Omega_{AB}$ of bipartite states is contained in the maximal tensor product $\Omega_A\otimes_{\text{max}}\Omega_B$, that the set of effects $\mathcal{F}_N$ is contained in the space of normalized effects $\mathcal{E}_{AB}$ of $\Omega_{AB}$, and that the local transformations $\mathcal{T}_N$ are consistent in the sense that $T_\mu: \Omega_A\rightarrow \Omega_A$ and $(T_\mu\otimes I_B): \Omega_{AB}\rightarrow \Omega_{AB}$ for all $T_\mu\in\mathcal{T}_N$. Additionally, we show that tomographic locality is satisfied.

Let us first show that $\mathcal{T}_N$ satisfies the consistency condition. Consider $\omega_a\in\Omega_A$. From (\ref{eq:43.53}), we have $T_\mu\omega_a=\bigl(\begin{smallmatrix}
1\\ a_\mu
\end{smallmatrix} \bigr)$, where $a_\mu\equiv \hat{T}_\mu a$. Since $\hat{T}_\mu\in\text{O}(2^N-1)$ and $\omega_a\in\Omega_A$, we have $\lVert a_\mu\rVert =\lVert a\rVert \leq 1$. Thus, $T_\mu\omega_a\in\Omega_A$. Consider now $\phi\in\Omega_{AB}$. If $\phi\in\Omega_A\otimes_{\text{min}}\Omega_B$, it follows straightforwardly that $T_\mu\phi\in\Omega_{AB}$. If $\phi$ is one of the $\phi_{\mu'}\in\mathcal{S}_N$, we have from (\ref{eq:43.51}) and (\ref{eq:43.52}) that $T_\mu\phi_{\mu'}=\phi_{\mu\oplus\mu'} \in\mathcal{S}_N\in\Omega_{AB}$ for any $\phi_{\mu'}\in \mathcal{S}_N$. 

Let us now show that $\mathcal{S}_N\subset\Omega_A\otimes_{\text{max}}\Omega_B$. This implies from the definition of $\Omega_{AB}$ that $\Omega_A\otimes_{\text{min}}\Omega_B\subset\Omega_{AB}\subseteq \Omega_A\otimes_{\text{max}}\Omega_B$. Let $\phi_\mu\in\mathcal{S}_N$ and consider arbitrary extremal effects  $e_{\alpha}\equiv \frac{1}{2}\bigl(\begin{smallmatrix}
1\\ \alpha
\end{smallmatrix} \bigr)\in\mathcal{E}_A$ and $e_{\beta}\equiv \frac{1}{2}\bigl(\begin{smallmatrix}
1\\ \beta
\end{smallmatrix} \bigr)\in\mathcal{E}_B$, where $\alpha,\beta\in\mathbb{R}^{2^N-1}$ and $\lVert \alpha\rVert=\lVert \beta\rVert=1$. We have that
\begin{equation}
\label{eq:43.31}
(e_{\alpha}\otimes e_{\beta}) \cdot\phi_\mu= \frac{1}{4}\bigl(\begin{smallmatrix}
1\\ \alpha
\end{smallmatrix} \bigr) \cdot \bigl(\begin{smallmatrix}
1\\ \beta_\mu
\end{smallmatrix} \bigr)=\frac{1}{4}(1+\alpha\cdot \beta_\mu),
\end{equation}
where $\beta_\mu\equiv \hat{T}_\mu \beta$. Since $\hat{T}_\mu\in \text{O}(2^N-1)$, we have $\lVert \beta_\mu\rVert = \lVert \beta \rVert = 1$. Thus, $0\leq (e_{\alpha}\otimes e_{\beta}) \cdot\phi_\mu \leq \frac{1}{2}< 1$. Furthermore, from (\ref{eq:43.0}) and (\ref{eq:43.310}),we have that $(u_A\otimes u_B)\cdot \phi_\mu=(d_\mu)_{\bold{0}}=1$, which proves the claim.

In addition, tomographic locality is satisfied. It is straightforward to see that the entries $(\phi)_{\nu,\nu'}$ of the states $\phi\in\Omega_{AB}$ are determined by the local measurements $M_A=\bigl\lbrace \frac{1}{2}\bigl(\begin{smallmatrix}
1\\ v_\nu
\end{smallmatrix} \bigr), \frac{1}{2}\bigl(\begin{smallmatrix}
1\\ -v_\nu
\end{smallmatrix} \bigr)\bigr\rbrace$ and $M_B=\bigl\lbrace \frac{1}{2}\bigl(\begin{smallmatrix}
1\\ v_{\nu'}
\end{smallmatrix} \bigr), \frac{1}{2}\bigl(\begin{smallmatrix}
1\\ -v_{\nu'}
\end{smallmatrix} \bigr)$ on systems $A$ and $B$, respectively, where $v_\nu\in\mathbb{R}^{2^N-1}$ is a vector whose $\nu$th entry equals unity and whose other entries are zero.

Finally, we verify that $\mathcal{F}_N\subset\mathcal{E}_{AB}$. We need to show that, for any $E_\mu\in \mathcal{F}_N$, we have: i) $0\leq E_\mu\cdot(\omega_a\otimes\omega_b)\leq 1$ for arbitrary pure states $\omega_a\equiv\bigl(\begin{smallmatrix}
1\\ a
\end{smallmatrix} \bigr)\in\Omega_A$ and $\omega_b\equiv\bigl(\begin{smallmatrix}
1\\ b
\end{smallmatrix} \bigr)\in\Omega_B$, where $a,b\in\mathbb{R}^{2^N-1}$ and $\lVert a\rVert=\lVert b\rVert=1$; and ii) $0\leq E_\mu\cdot\phi_{\mu'}\leq 1$ for arbitrary entangled states $\phi_{\mu'}\in \mathcal{S}_N$.

We show i). From (\ref{eq:43.FN}), (\ref{eq:43.51}) and (\ref{eq:43.53}), we have
\begin{equation}
E_\mu\cdot(\omega_a\otimes\omega_b)=2^{-N}\phi_{\mu}\cdot(\omega_a\otimes\omega_b)=2^{-N}(1+a\cdot b_{\mu}),
\end{equation}
where $b_\mu\equiv \hat{T}_{\mu} b$. Since $\hat{T}_{\mu}\in \text{O}(2^N-1)$, we have $\lVert b_\mu\rVert = \lVert b \rVert = 1$. Thus, $0\leq E_\mu \cdot(\omega_a\otimes \omega_b)  \leq 2^{-(N-1)}\leq 1$, as claimed.

We show ii). This is implied by (\ref{eq:43.33}).

\section{Communication in GPTs and hypersphere theories}\label{seccomm}

\subsection{Classical capacities}
Consider a situation where Alice sends Bob a classical message $x$ with probability $p(x)$ by encoding it in a GPT state $\omega_x$ and where Bob decodes the message using a decoding measurement $M=\{e_y \in \mathcal{E} \vert \sum_y e_y=u\}$. Given that the message $x$ was sent, Bob obtains the outcome $y$ with probability $p(y|x)=e_y\cdot\omega_x$. The mutual information $I(X:Y)=\sum_{xy} p(xy)\log_2\Bigl( p(xy)/\bigl(p(x)p(y)\bigr)\Bigr)$ quantifies the amount of classical information that is transmitted through such a protocol.

\theoremstyle{definition}
\newtheorem*{CCGPT}{Classical capacity of a GPT}
\begin{CCGPT}
\label{CCGPT}
The classical capacity $\chi_{\text{C}}(\Omega)$ of a GPT with state space $\Omega$ 
is the maximum of $I(X:Y)$ over all probability distributions $p(x)$, encoding states $\omega_x\in\Omega$, and decoding measurements $M$. 
\end{CCGPT}

The classical capacity of $d$-dimensional classical theory (a classical dit) is $\log_2 d$. Similarly, the classical capacity of a qudit is $\log_ 2 d$, as follows from the Holevo bound \cite{K73}. The following proposition allows one to put upper bounds on the classical capacities of GPTs. It will prove useful below.

\begin{proposition}
\label{theorem1A}
Consider a GPT with state space $\Omega\equiv \bigl\lbrace \omega_r \equiv \bigl(\begin{smallmatrix}
1\\ r
\end{smallmatrix} \bigr)\big\vert r\in \mathcal{R} \subset \mathbb{R}^n\bigr\rbrace $ and unit effect $u\equiv \bigl(\begin{smallmatrix}
1\\ \bold{0}
\end{smallmatrix} \bigr)$. Assume that any effect $e\in\mathcal{E}$ can be expresed as $e=\gamma\bigl(\begin{smallmatrix}
1\\ m
\end{smallmatrix} \bigr)$, with $\gamma\in[0,1]$ and $m\in\mathcal{M}\subset \mathbb{R}^n$. 
Then the classical capacity of the GPT $\Omega$ is bounded by
\begin{equation}
\label{eq:5}
\chi_{C}(\Omega)\leq \max_{r \in \mathcal{R}, m\in \mathcal{M}} \bigl\lbrace \log_2 (1+ m\cdot r )\bigr\rbrace\leq  \log_2 (1+  MR),\nonumber
\end{equation}
where $M\equiv \max_{m\in\mathcal{M}}\lbrace {\lVert m \rVert}\rbrace$ and $R\equiv \max_{r\in\mathcal{R}}\lbrace{\lVert r \rVert}\rbrace$ .
\end{proposition}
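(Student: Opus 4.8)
The plan is to bound the mutual information of an \emph{arbitrary} fixed encoding/decoding protocol and then optimise over protocols. First I would fix a distribution $p(x)$, encoding states $\omega_{r_x}=\left(\begin{smallmatrix}1\\ r_x\end{smallmatrix}\right)$ with $r_x\in\mathcal{R}$, and a decoding measurement $\{e_y\}$ with $e_y=\gamma_y\left(\begin{smallmatrix}1\\ m_y\end{smallmatrix}\right)$, $\gamma_y\in[0,1]$, $m_y\in\mathcal{M}$. The conditional probabilities then factorise, $p(y|x)=e_y\cdot\omega_{r_x}=\gamma_y(1+m_y\cdot r_x)$. Reading off the two blocks of the measurement constraint $\sum_y e_y=u=\left(\begin{smallmatrix}1\\ \mathbf{0}\end{smallmatrix}\right)$ gives $\sum_y\gamma_y=1$ and $\sum_y\gamma_y m_y=\mathbf{0}$; in particular $q(y):=\gamma_y$ is itself a legitimate probability distribution on the outcome alphabet, which is the crucial observation.

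Next I would invoke the standard variational bound on mutual information: for \emph{any} reference distribution $q$ on the outcomes, $I(X:Y)\le\max_x D\bigl(p(\cdot|x)\,\|\,q\bigr)$, where $D$ denotes relative entropy. This is proved by writing $I(X:Y)=\sum_x p(x)\,D\bigl(p(\cdot|x)\,\|\,p_Y\bigr)$ with $p_Y$ the output marginal, and observing that $\sum_x p(x)\bigl[D(p(\cdot|x)\|q)-D(p(\cdot|x)\|p_Y)\bigr]=D(p_Y\|q)\ge 0$; hence replacing $p_Y$ by any $q$ only raises the convex combination, which is in turn at most its largest term. The decisive move is to take $q=\gamma$, which is admissible by the previous paragraph.

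With this choice the effect weights cancel: $D\bigl(p(\cdot|x)\,\|\,\gamma\bigr)=\sum_y p(y|x)\,\log_2\frac{\gamma_y(1+m_y\cdot r_x)}{\gamma_y}=\sum_y p(y|x)\,\log_2(1+m_y\cdot r_x)$. Since $\sum_y p(y|x)=1$ and each summand obeys $\log_2(1+m_y\cdot r_x)\le\log_2\bigl(1+\max_y m_y\cdot r_x\bigr)$, I get $D(p(\cdot|x)\|\gamma)\le\log_2\bigl(1+\max_{m\in\mathcal{M}}m\cdot r_x\bigr)$. Maximising over $x$, hence over $r\in\mathcal{R}$, yields $I(X:Y)\le\max_{r\in\mathcal{R},\,m\in\mathcal{M}}\log_2(1+m\cdot r)$; as the protocol was arbitrary this bounds $\chi_C(\Omega)$ and gives the first inequality. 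The second is then immediate from Cauchy--Schwarz, $m\cdot r\le\lVert m\rVert\,\lVert r\rVert\le MR$, together with monotonicity of the logarithm. I expect the only genuinely nontrivial point to be recognising that $q=\gamma$ is simultaneously a valid distribution \emph{and} the choice that cancels the $\gamma_y$ factors; the rest is bookkeeping, modulo the usual conventions ($0\log 0=0$) needed for outcomes with $\gamma_y=0$ or $1+m_y\cdot r_x=0$.
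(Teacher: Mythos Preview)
Your proof is correct and, at its core, coincides with the paper's: both hinge on the observation that $\{\gamma_y\}$ is itself a probability distribution and that comparing $p(\cdot\,|\,x)$ against it makes the $\gamma_y$ factors drop out. The paper carries this out by writing $I(X:Y)=L_1-L_2$ with
\[
L_1=\sum_x p_x\sum_y \gamma_y(1+m_y\cdot r_x)\log_2(1+m_y\cdot r_x),\qquad
L_2=\sum_y \gamma_y(1+m_y\cdot\bar r)\log_2(1+m_y\cdot\bar r),
\]
bounding $L_1\le D$ termwise (using $\sum_y\gamma_y=1$ and $\sum_y\gamma_y m_y=\mathbf 0$), and showing $L_2\ge 0$ via convexity of $z\mapsto z\log_2 z$. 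In your language $L_1=\sum_x p_x\,D\bigl(p(\cdot\,|\,x)\,\|\,\gamma\bigr)$ and $L_2=D(p_Y\,\|\,\gamma)$, so the paper's convexity step is precisely your nonnegativity of relative entropy, and its termwise bound on $L_1$ is your bound on each $D\bigl(p(\cdot\,|\,x)\,\|\,\gamma\bigr)$. Your packaging via the variational upper bound $I(X:Y)\le\max_x D\bigl(p(\cdot\,|\,x)\,\|\,q\bigr)$ is slightly cleaner and makes the role of the reference distribution explicit; the paper's version is more self-contained since it does not presuppose that lemma.
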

\begin{proof}
Let $X$ be the random variable corresponding to messages $x$, chosen with probability $p(x)=p_x$, encoded in states $\omega_x \equiv \bigl(\begin{smallmatrix}
1\\ r_x
\end{smallmatrix} \bigr)$,  where $r_x\in\mathcal{R}$. Let $Y$ be the random variable of the measurement outcomes $y$ corresponding to effects $ e_y\equiv \gamma_y\bigl(\begin{smallmatrix}
1\\ m_y
\end{smallmatrix} \bigr)$, where $\gamma_y\in[0,1]$ and $m_y\in\mathcal{M}$. 
The condition $\sum_y e_y=u\equiv\bigl(\begin{smallmatrix}
1\\ \bold{0}
\end{smallmatrix} \bigr)$ implies
\begin{equation}
\label{eq:6.3}
\sum_y \gamma_y =1,\qquad
\sum_{y}\gamma_y m_y = \bold{0}. 
\end{equation}
The outcome probabilities are
\begin{equation}
\label{eq:6.1}
p(y\vert x) = e_y\cdot \omega_x=\gamma_y(1+ m_y \cdot r_x).
\end{equation}
The condition $p(y\vert x)\geq 0$ implies that
\begin{equation}
\label{eq:6.2}
m_y \cdot r_x \geq -1,
\end{equation}
for $\gamma_y>0$. Notice that the case $\gamma_y=0$ corresponds to probabilities $p(y\vert x)=0$ which have null contributions to the Shannon entropies $H(Y\vert X)$ and $H(Y)$, and hence to the mutual information $I(X:Y)$. Thus, without loss of generality we consider that $\gamma_y>0$, for which (\ref{eq:6.2}) holds.

 From the definition of the mutual information $I(X:Y)=H(Y)-H(Y\vert X)$ and (\ref{eq:6.1}), it is straightforward to obtain the expression
\begin{equation}
\label{eq:6.5}
I(X:Y)=L_1 - L_2,
\end{equation}
where 
\begin{eqnarray}
\label{eq:6.6}
L_1\!&\equiv&\! \sum_x\! p_x\!\sum_y\! \gamma_y(1+m_y\!\cdot\! r_x)\log_2(1+m_y\!\cdot\! r_x),\\
\label{eq:6.7}
L_2\!&\equiv&\! \sum_y\! \gamma_y(1+m_y\!\cdot\! \bar{r})\log_2(1+m_y\!\cdot\! \bar{r}),
\end{eqnarray}
and $ \bar{r}\equiv{\sum_xp_xr_x}$. 

Let $D\equiv \max_{r_x \in \mathcal{R}, m_y\in \mathcal{M}} \bigl\lbrace \log_2 (1+ m_y\cdot r_x )\bigr\rbrace$. We have from (\ref{eq:6.6}),  the definition of $D$, and (\ref{eq:6.3}), that
\begin{eqnarray}
\label{eq:6.10}
L_1&\leq& \sum_xp_x\sum_y\gamma_y(1+m_y\cdot r_x) D,\nonumber\\
\quad &=&\Bigl(\sum_y\gamma_y+ \sum_y\gamma_ym_y\cdot \bar{r}\Bigr)D,\nonumber\\
\quad &=&D,
\end{eqnarray}

Denote $z_y=1+m_y\cdot\bar{r}$. We have $z_y\geq 0$ and $\sum_y \gamma_y z_y =1$ with $\gamma_y\geq 0$ a probability distribution since $\sum_y \gamma_y =1$. Hence we have
 \begin{eqnarray}
\label{eq:6.11}
L_2&=& \sum_y \gamma_y z_y \log_2 z_y,\nonumber\\
\quad &\geq & \left ( \sum_y \gamma_y z_y \right) \log_2 \left ( \sum_y \gamma_y z_y \right) ,\nonumber\\
&=& 0
\end{eqnarray}
where we have used convexity of $x\log_2 x$ for $x\geq 0$.
\end{proof}

\begin{proposition}
\label{theorem1}
The classical capacity of  hypersphere theories is equal to $1$ bit.
\end{proposition}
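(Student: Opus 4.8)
The plan is to apply Proposition~\ref{theorem1A} directly to the single-system hypersphere state space and then exhibit an explicit protocol that saturates the resulting bound. For a hypersphere theory, the state space is $\Omega=\{\omega_r\mid \lVert r\rVert\leq 1\}$, so the reduced radii satisfy $R\equiv\max_{r\in\mathcal{R}}\lVert r\rVert=1$. The extremal effects are $e_m=\tfrac12\bigl(\begin{smallmatrix}1\\ m\end{smallmatrix}\bigr)$ with $\lVert m\rVert=1$, so any effect can be written as $e=\gamma\bigl(\begin{smallmatrix}1\\ m\end{smallmatrix}\bigr)$ with $\gamma\in[0,1]$ and $\lVert m\rVert\leq 1$, giving $M\equiv\max_{m\in\mathcal{M}}\lVert m\rVert=1$. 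The hypotheses of Proposition~\ref{theorem1A} are therefore met, and substituting $M=R=1$ into the bound yields $\chi_C(\Omega)\leq\log_2(1+MR)=\log_2 2=1$ bit.

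For the matching lower bound, I would display a concrete two-message protocol achieving one bit. Fix any unit vector $m\in\mathbb{R}^n$ and let Alice encode the bit $x\in\{0,1\}$ in the two antipodal pure states $\omega_{(-1)^x m}=\bigl(\begin{smallmatrix}1\\ (-1)^x m\end{smallmatrix}\bigr)$, each chosen with probability $\tfrac12$. Bob decodes with the canonical measurement $M_m=\{e_m,e_{-m}\}$. Using $p(y\mid x)=e_{(-1)^y m}\cdot\omega_{(-1)^x m}=\tfrac12\bigl(1+(-1)^{x+y}\lVert m\rVert^2\bigr)$ and $\lVert m\rVert=1$, the channel is noiseless: $p(y\mid x)=\delta_{x,y}$. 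Hence $I(X:Y)=H(X)=1$ bit, which combined with the upper bound gives $\chi_C(\Omega)=1$ exactly, independently of $n$.

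Since both halves are short, there is no real obstacle here; the only thing to be careful about is verifying that the proposition's hypothesis on the form of effects genuinely holds for hypersphere theories. That requires noting that $\mathcal{E}$, defined as the normalised effects of the ball, is precisely the set of $e$ lying in the dual cone with $e\cdot\omega\leq 1$ for all $\omega\in\Omega$, and confirming that every such effect admits the representation $e=\gamma\bigl(\begin{smallmatrix}1\\ m\end{smallmatrix}\bigr)$ with $\gamma\in[0,1]$ and $\lVert m\rVert\leq 1$ — which follows from the fact that the effect cone of the ball is self-dual up to the stated normalisation, the extremal normalised effects being exactly the $e_m$ with $\lVert m\rVert=1$. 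Once that structural point is in hand, the argument reduces to the two bounds above.
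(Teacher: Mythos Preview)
Your proof is correct and follows essentially the same approach as the paper: both invoke Proposition~\ref{theorem1A} with $M=R=1$ for the upper bound and exhibit the antipodal-state protocol $\{\omega_r,\omega_{-r}\}$ with measurement $\{e_r,e_{-r}\}$ for the matching lower bound. Your additional paragraph verifying that the effect set of the ball really has the form required by Proposition~\ref{theorem1A} is a welcome point of care that the paper leaves implicit.
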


\begin{proof}
It is immediate to check that the channel in which Alice prepares pure states $\omega_ r$ and $\omega_{- r}$, with $\lVert r \rVert =1$, each with probability $\frac{1}{2}$,
and Bob carries out measurement $M=\bigl\{\frac{1}{2}\bigl(\begin{smallmatrix}
1\\ r
\end{smallmatrix} \bigr) , \frac{1}{2}\bigl(\begin{smallmatrix}
1\\-r
\end{smallmatrix} \bigr)\bigr\}$, has capacity $1$ bit. Hence the capacity of hypersphere theories is at least $1$ bit.

The converse follows from Proposition \ref{theorem1A}, as in the case of hypersphere theories, we have $M=R=1$.
\end{proof}

\theoremstyle{definition}
\newtheorem*{SACC}{Superadditivity of classical capacities}
\begin{SACC}
\label{SACC}
The classical capacity $\chi_{\text{C}}(\Omega_{AB})$ of a GPT with state space $\Omega_{AB}$ is superadditive if $\chi_{\text{C}}(\Omega_{AB})>\chi_{\text{C}}(\Omega_A)+\chi_{\text{C}}(\Omega_B)$.
\end{SACC}
The classical capacities of classical and quantum theory are additive: $\chi_{\text{C}}(\Omega_{AB})=\chi_{\text{C}}(\Omega_A)+\chi_{\text{C}}(\Omega_B)$. We show below that the hypersphere theories defined in section \ref{secht} have superadditive classical capacities.

\subsection{Dense coding}

The classical capacity of a GPT is the maximum classical information that can be transmitted without the assistance of previously shared resources. The classical capacity can sometimes be enhanced by the use of previously shared entanglement, as in the following general dense coding protocol.

In a dense coding protocol, Alice and Bob initially share a bipartite state $\phi\in\Omega_{AB}$. Alice chooses a message $x$ with probability $p_x$ from a finite set and applies the local transformation $T_x\in\mathcal{T}_A$ on her system $A$. After Alice's operation, the state is transformed into $\phi_x\equiv T_x \phi$. Alice sends Bob her system. After receiving system $A$, Bob applies a measurement on the composite system $AB$, defined by the set of effects $\lbrace E_y\in\mathcal{E}_{AB}\vert \sum_yE_y=u_{AB}\rbrace$, obtaining the outcome $y$ with probability $p(y\vert x) = E_y\cdot\phi_x$.

Note that following our notation given by (\ref{eq:011}) and (\ref{eq:012}), $E_y$ and $\phi_x$ are real matrices, $T_x\phi$ denotes matrix multiplication, and $E_y\cdot\phi_x=\text{Tr}(E_y^{\text{t}}\phi_x)$, where `t' denotes transposition.

\theoremstyle{definition}
\newtheorem*{DCC}{Dense coding capacity of a state}
\begin{DCC}
\label{DCC}
The dense coding capacity $\chi_{\text{DC}}(\phi)$ of a GPT state $\phi\in\Omega_{AB}$ is $\chi_{\text{DC}}(\phi)=\max I(X:Y)$ where the max is taken over all \hyperref[DC]{dense coding} protocols in which the initially shared state is $\phi$ and where $I(X:Y)$ is the mutual information between $x$ and $y$.
\end{DCC}

\theoremstyle{definition}
\newtheorem*{TDCC}{Dense coding capacity of a GPT}
\begin{TDCC}
\label{TDCC}
The dense coding capacity $\chi_{\text{DC}}(\Omega_{AB})$ of a GPT with state space $\Omega_{AB}$
is the maximum of $\chi_{\text{DC}}(\phi)$ over all states $\phi\in\Omega_{AB}$.
\end{TDCC}

Note that we trivially have the following inequality, since a dense coding protocol can be viewed as a communication protocol in which the states $T_x\phi$ are prepared:
\begin{proposition}
\label{theochichi}
The \hyperref[CCGPT]{classical} and \hyperref[TDCC]{dense coding} capacities of a GPT with bipartite state space $\Omega_{AB}$ satisfy $\chi_{\text{DC}}(\Omega_{AB}) \leq \chi_{\text{C}}(\Omega_{AB})$.
\end{proposition}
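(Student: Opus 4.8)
The plan is to exploit exactly the remark already contained in the statement: a dense coding protocol is nothing but a restricted communication protocol, so every mutual information it achieves is automatically among those counted by the classical capacity of $\Omega_{AB}$. The whole argument is therefore a ``forgetting of structure'' reduction, and the only thing that genuinely needs checking is that the objects produced by a dense coding protocol are legitimate ingredients of a communication protocol in the sense of the classical capacity definition.

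First I would fix an arbitrary dense coding protocol on $\Omega_{AB}$: a shared state $\phi\in\Omega_{AB}$, a probability distribution $p_x$ over messages, local transformations $T_x\in\mathcal{T}_A$, and a decoding measurement $\{E_y\}$ with $\sum_y E_y=u_{AB}$. It produces the post-operation states $\phi_x\equiv T_x\phi$ (that is, $T_x\otimes I_B$ applied to $\phi$) and the conditional probabilities $p(y\vert x)=E_y\cdot\phi_x$. I would then reread this same data as a communication protocol, taking encoding states $\omega_x\equiv\phi_x$, the same prior $p_x$, and the same decoding measurement $\{E_y\}$.

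The key step is to verify that the $\omega_x$ are admissible encoding states, i.e.\ that $\phi_x\in\Omega_{AB}$. This is precisely the content of the consistency condition on allowed transformations: since $T_x\in\mathcal{T}_A$ we have $T_x\otimes I_B:\Omega_{AB}\to\Omega_{AB}$, and Alice's operation acts as $T_x$ on $A$ while leaving $B$ untouched, so $\phi_x=(T_x\otimes I_B)\phi\in\Omega_{AB}$. The set $\{E_y\}$ is by hypothesis a valid measurement on $\Omega_{AB}$, hence an admissible decoding measurement. Because the conditional probabilities $p(y\vert x)=E_y\cdot\omega_x$ coincide with those of the original protocol, the two protocols realise the same joint distribution $p(x,y)$ and therefore the same mutual information $I(X:Y)$.

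It then remains to take maxima. For each fixed $\phi$ the construction exhibits a communication protocol attaining $\chi_{\text{DC}}(\phi)$, so $\chi_{\text{DC}}(\phi)\leq\chi_{\text{C}}(\Omega_{AB})$, and maximising over $\phi\in\Omega_{AB}$ gives $\chi_{\text{DC}}(\Omega_{AB})\leq\chi_{\text{C}}(\Omega_{AB})$. Equivalently, the dense coding protocols are exactly those communication protocols whose encoding states all arise as $T_x\phi$ from a common $\phi$, so the dense coding optimisation runs over a subset of the communication optimisation and the inequality is immediate. I do not expect any real obstacle: the single substantive point is the membership $\phi_x\in\Omega_{AB}$, which follows directly from the consistency condition, and everything else is bookkeeping of the two optimisation domains.
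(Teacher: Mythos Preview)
Your proposal is correct and follows exactly the approach the paper uses: the paper simply remarks that ``a dense coding protocol can be viewed as a communication protocol in which the states $T_x\phi$ are prepared,'' and you have unpacked this into a careful argument, in particular making explicit the one nontrivial check (that $\phi_x\in\Omega_{AB}$ via the consistency condition) which the paper leaves implicit.
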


The terms `dense coding' and `superdense coding' are usually treated as synonyms in quantum theory. We use here the terminology `superdense coding' as follows.

\theoremstyle{definition}
\newtheorem*{SDC}{Superdense coding}
\begin{SDC}
\label{SDC}
A superdense coding (SDC) protocol is a \hyperref[DC]{dense coding} protocol implemented with a state $\phi\in\Omega_{AB}$ whose capacity satisfies $\chi_{\text{DC}}(\phi)>\chi_{\text{C}}(\Omega_{A})$.
\end{SDC}

We prove now that superdense coding requires some type of entangled states, hence, it is impossible in classical probabilistic theories.
\begin{proposition}
\label{theorem0}
Superdense coding is impossible in a GPT with joint state space $\Omega_{AB}=\Omega_A\otimes_{\text{min}}\Omega_B$.
\end{proposition}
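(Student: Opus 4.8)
The plan is to show that for every state $\phi\in\Omega_A\otimes_{\text{min}}\Omega_B$ and every dense coding protocol built on it, the mutual information $I(X:Y)$ cannot exceed $\chi_{\text{C}}(\Omega_A)$; by the definition of superdense coding this proves impossibility. The starting point is that a state in the minimal tensor product is separable, so I would write $\phi=\sum_k q_k\,\omega_A^{(k)}\otimes\omega_B^{(k)}$ with $q_k\geq 0$, $\sum_k q_k=1$, $\omega_A^{(k)}\in\Omega_A$ and $\omega_B^{(k)}\in\Omega_B$. Since Alice's encoding acts as $T_x\otimes I_B$ and is linear, the encoded state stays separable: $\phi_x=(T_x\otimes I_B)\phi=\sum_k q_k\,(T_x\omega_A^{(k)})\otimes\omega_B^{(k)}$, where each $T_x\omega_A^{(k)}\in\Omega_A$ by the consistency condition on $\mathcal{T}_A$.

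The key step is to reduce Bob's possibly entangled measurement to a local measurement on $A$. For each index $k$ I would define an effective effect $\tilde e_y^{(k)}$ on $\Omega_A$ by $\tilde e_y^{(k)}\cdot\omega_A\equiv E_y\cdot(\omega_A\otimes\omega_B^{(k)})$, i.e.\ by contracting Bob's fixed conditional state $\omega_B^{(k)}$ into the $B$ slot of $E_y$. I would then verify that $\{\tilde e_y^{(k)}\}_y$ is a bona fide measurement on $\Omega_A$: nonnegativity holds because $E_y$ is nonnegative on product states, the bound $\tilde e_y^{(k)}\cdot\omega_A\leq 1$ holds because $E_y\in\mathcal{E}_{AB}$, and normalization $\sum_y\tilde e_y^{(k)}=u_A$ follows from $\sum_y E_y=u_A\otimes u_B$ together with $u_B\cdot\omega_B^{(k)}=1$. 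The dense coding statistics then decompose as $p(y|x)=E_y\cdot\phi_x=\sum_k q_k\,\tilde e_y^{(k)}\cdot(T_x\omega_A^{(k)})\equiv\sum_k q_k\,p(y|x,k)$, exhibiting the channel as a convex mixture of genuine single-system communication channels on $A$, each with mutual information at most $\chi_{\text{C}}(\Omega_A)$ by definition of the classical capacity.

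The final step is an information-theoretic data-processing argument. I would model the separable decomposition as a fictitious hidden variable $K$ with $p(k)=q_k$, drawn independently of Alice's message $X$, so that $p(y|x,k)$ is the channel conditioned on $K=k$. Because $X$ and $K$ are independent, $I(X:K)=0$, and the chain rule gives $I(X:Y)\leq I(X:YK)=I(X:Y\mid K)=\sum_k q_k\,I(X:Y\mid K=k)$. Since each conditional term is the mutual information of a valid communication protocol on $A$, each is bounded by $\chi_{\text{C}}(\Omega_A)$, whence $I(X:Y)\leq\chi_{\text{C}}(\Omega_A)$. Maximizing over protocols yields $\chi_{\text{DC}}(\phi)\leq\chi_{\text{C}}(\Omega_A)$, so no state in the minimal tensor product admits superdense coding.

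I expect the main obstacle to be the reduction step, because Bob's measurement may be entangled and therefore does not factorize, so one cannot naively localize the effects. The resolution is that Alice never acts on $B$, so conditioned on the hidden index $k$ the $B$ subsystem sits in a fixed state $\omega_B^{(k)}$; contracting $E_y$ against this fixed state produces a well-defined effective effect on $A$, and the only real work is checking that these effective effects meet the positivity and normalization constraints required of a measurement on $\Omega_A$.
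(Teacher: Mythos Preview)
Your proposal is correct and follows essentially the same approach as the paper: decompose the separable initial state, use convexity of the mutual information (which you phrase via a hidden variable $K$ and the chain rule, while the paper simply invokes convexity to reduce to a single product term), and then contract Bob's fixed $B$-state into the effects $E_y$ to obtain a bona fide measurement on $\Omega_A$, bounding each conditional mutual information by $\chi_{\text{C}}(\Omega_A)$. The only cosmetic difference is that the paper first reduces to a single product state $\omega\otimes\omega'$ and then defines $e'_y=E_y\omega'$, whereas you keep the mixture and define $\tilde e_y^{(k)}$ for each $k$; both routes are equivalent.
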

\begin{proof}
Suppose that $\Omega_{AB}=\Omega_A\otimes_{\text{min}}\Omega_B$ and consider the dense coding protocols described above. The initial state $\phi\in\Omega_{AB}$ shared by Alice and Bob is separable $\phi=\sum_z q_z\omega_z\otimes\omega'_z$, where $\omega_z\in\Omega_A$, $\omega'_z\in\Omega_B$ and $\{q_z\}_{z}$ is a probability distribution. After Alice's operation $T_x$, the state is transformed into $\phi_x=\sum_z q_z\omega_{z,x}\otimes\omega'_z$, where $\omega_{z,x}=T_x\omega_z\in\Omega_{A}$. 
Bob's measurement is $M=\{E_y\in\mathcal{E}_{AB}\}_y$.
The probability of outcome $y$ given input $x$ is $p(y\vert x)=E_y \cdot \left(\sum_z q_z\omega_{z,x}\otimes\omega'_z\right)$.
Convexity of the mutual information implies that
$I(X:Y)\leq \sum_z q_z I_z(X:Y)$
where $I_z(X:Y)$ is the mutual information between $x$ and $y$ when the initial state is $\omega_{z,x}\otimes\omega'_z$. 
Hence we can take the initial state to be a product state $\phi=\omega\otimes\omega'$, and drop the label $z$.
We can write 
$p(y\vert x)=E_y\cdot (\omega_{x}\otimes\omega')= \omega_{x} \cdot e'_{y}$
where $\omega_x=T_x\omega$ and $e'_y= E_y \omega'$.
The set $\{e'_y\}_y$ constitutes a measurement on $\Omega_A$ (i.e. $e'_y \cdot  \omega \geq 0 ~\forall \omega\in\Omega_A$ and $\sum_y e'_y =u_A$, which are implied by the fact that
$0\leq E_y\cdot \phi$ for all $\phi\in\Omega_{AB}$ and $\sum_y E_y =u_{AB}$).
Therefore $p(y\vert x)=e'_y \cdot  \omega_x$ is a probability distribution that can be obtained by a measurement on $\Omega_A$. Therefore $\chi_{\text{DC}}(\Omega_{AB})\leq \chi_{\text{C}}(\Omega_A)$, hence, there cannot be superdense coding.
\end{proof}

Intuitively, one also expects that superdense coding requires entangled effects, that is, that superdense coding is impossible if $\mathcal{E}_{AB}=\mathcal{E}_A\otimes_{\text{min}}\mathcal{E}_B$. It is easy to show that superdense coding is impossible if Bob's measurement is a convex combination of product measurements:
\begin{proposition}
\label{theorem0.5}
Superdense coding is impossible if Bob's decoding measurement consists of effects of the form $E_{y_1y_2}=\sum_z q_z e_{y_1}^{(z)}\otimes f_{y_2}^{(z)}$ with $\{q_z\}_z$ a probability distribution, and $\{e_{y_1}^{(z)}\in\mathcal{E}_A\vert \sum_{y_1} e_{y_1}^{(z)}=u_A\}$ and $\{f_{y_2}^{(z)}\in\mathcal{E}_B\vert  \sum_{y_2} f_{y_2}^{(z)}=u_B\}$ being measurements on systems $A$ and $B$, respectively.
\end{proposition}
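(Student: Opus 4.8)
The plan is to follow the template of the proof of Proposition~\ref{theorem0}, upgrading it to handle an entangled (but still locally decodable) measurement on Bob's side. Concretely, I want to show that for every dense coding protocol whose decoding measurement has the stated form, the mutual information satisfies $I(X:Y_1Y_2)\le\chi_{\text{C}}(\Omega_A)$, which by the definition of superdense coding rules it out. Write $\phi_x=T_x\phi$ for Alice's encoded state, so that $p(y_1y_2\vert x)=E_{y_1y_2}\cdot\phi_x=\sum_z q_z\,(e^{(z)}_{y_1}\otimes f^{(z)}_{y_2})\cdot\phi_x$. The first step is to treat the label $z$ of the convex decomposition as a classical register that Bob samples with probability $q_z$ independently of Alice's message $x$. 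Since revealing side information cannot decrease mutual information and $Z$ is independent of $X$, I get $I(X:Y_1Y_2)\le I(X:Y_1Y_2Z)=I(X:Y_1Y_2\vert Z)=\sum_z q_z\,I_z(X:Y_1Y_2)$, where $I_z$ denotes the mutual information for the fixed product measurement $\{e^{(z)}_{y_1}\otimes f^{(z)}_{y_2}\}$. It therefore suffices to prove $I_z(X:Y_1Y_2)\le\chi_{\text{C}}(\Omega_A)$ for each fixed $z$.

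Fix $z$. The second step exploits no-signalling. Because $T_x$ is normalisation-preserving, $T_x^{\text{t}}u_A=u_A$, so the reduced state of $B$ is unchanged by Alice's operation; consequently the marginal $p(y_2\vert x,z)=(u_A\otimes f^{(z)}_{y_2})\cdot\phi_x=f^{(z)}_{y_2}\cdot(\phi_x^{\text{t}}u_A)=f^{(z)}_{y_2}\cdot(\phi^{\text{t}}u_A)$ is independent of $x$, giving $I_z(X:Y_2)=0$ and hence $I_z(X:Y_1Y_2)=I_z(X:Y_1\vert Y_2)$. The third step is a steering argument that converts the remaining conditional information into a communication protocol on $\Omega_A$ alone. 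For each outcome $y_2$ define the (sub-normalised) steered vector $\sigma^{(z)}_{y_2}\equiv\phi f^{(z)}_{y_2}\in\mathbb{R}^{n_A+1}$; using $\phi_x=T_x\phi$ one checks that $p(y_1y_2\vert x,z)=e^{(z)}_{y_1}\cdot(T_x\sigma^{(z)}_{y_2})$, so after conditioning on $y_2$ the data $(X,Y_1)$ are produced by Alice encoding the states $\omega_x\equiv T_x\hat\sigma^{(z)}_{y_2}$ (with $\hat\sigma^{(z)}_{y_2}$ the normalised steered state) and Bob decoding with the measurement $\{e^{(z)}_{y_1}\}$ on $A$.

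Two checks make this last step work, and the second is where I expect the only real subtlety. First, the prior over $X$ is preserved under conditioning: since $p(y_2\vert x,z)$ is independent of $x$ by step two, $p(x\vert z,y_2)=p_x$, so each conditional defines a bona fide protocol with the original input distribution. Second, and this is the crux, one must verify that $\hat\sigma^{(z)}_{y_2}$ is a legitimate normalised state of $\Omega_A$, so that $\omega_x=T_x\hat\sigma^{(z)}_{y_2}\in\Omega_A$ and $\{e^{(z)}_{y_1}\}$ is genuinely a decoding measurement of a single $A$-system. This is exactly where the framework's assumption $\Omega_{AB}\subseteq\Omega_A\otimes_{\text{max}}\Omega_B$ enters: for any $e_A\in\mathcal{E}_A$ one has $e_A\cdot\sigma^{(z)}_{y_2}=(e_A\otimes f^{(z)}_{y_2})\cdot\phi\ge0$ because $f^{(z)}_{y_2}\in\mathcal{E}_B$, so $\sigma^{(z)}_{y_2}$ lies in the cone $\mathcal{C}_A$ and, after normalising by $u_A\cdot\sigma^{(z)}_{y_2}=p(y_2\vert z)$, belongs to $\Omega_A$. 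Granting these, $I_z(X:Y_1\vert Y_2=y_2)\le\chi_{\text{C}}(\Omega_A)$ by the definition of classical capacity, and averaging gives $I_z(X:Y_1\vert Y_2)=\sum_{y_2}p(y_2\vert z)\,I_z(X:Y_1\vert Y_2=y_2)\le\chi_{\text{C}}(\Omega_A)$. Combining with steps one and two yields $I(X:Y_1Y_2)\le\sum_z q_z\,\chi_{\text{C}}(\Omega_A)=\chi_{\text{C}}(\Omega_A)$, so $\chi_{\text{DC}}(\phi)\le\chi_{\text{C}}(\Omega_A)$ and superdense coding is impossible.
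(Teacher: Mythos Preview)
Your proof is correct and follows essentially the same route as the paper's: reduce the convex combination over $z$ to a single product measurement via convexity of mutual information, use no-signalling to kill $I(X:Y_2)$, and then identify the conditional channel $X\mapsto Y_1$ given $Y_2$ with an encoding/decoding protocol on $\Omega_A$ by steering. The only cosmetic differences are that you phrase the first reduction as introducing $Z$ as an auxiliary register and applying the chain rule $I(X:Y_1Y_2)\le I(X:Y_1Y_2Z)=I(X:Y_1Y_2\vert Z)$, whereas the paper simply invokes convexity, and that you steer from $\phi$ and then apply $T_x$ while the paper steers directly from $\phi_x$; these are equivalent.
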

\begin{proof}
Let Bob's measurement correspond to effects
\begin{equation}
\label{eq:d2}
E_{y_1,y_2}=\sum_z q_z e_{y_1}^{(z)} \otimes f_{y_2}^{(z)}\ ,
\end{equation}
with  $\{e_{y_1}^{(z)}\in\mathcal{E}_A\vert \sum_{y_1}e_{y_1}^{(z)}=u_A\}_{y_1}$ and $\{f_{y_2}^{(z)}\in\mathcal{E}_B\vert\sum_{y_2}f_{y_2}^{(z)} =u_B\}_{y_2}$ being measurements on systems $A$ and $B$, respectively. We show that in this case there cannot be \hyperref[SDC]{superdense coding}, that is, we show that $I(X:Y_1,Y_2)\leq \chi_{\text{C}}(\Omega_A)$, where $I(X:Y)$ is the mutual information between $x$ and $y$.

Convexity of the mutual information implies that
$I(X:Y)\leq \sum_z q_z I_z(X:Y)$
where $I_z(X:Y)$ is the mutual information between $x$ and $y$ for 
the measurement $E^{(z)}_{y_1,y_2}= e_{y_1}^{(z)} \otimes f_{y_2}^{(z)}$. Hence, we can  drop the index $z$, and
 consider a product measurement $E_{y_1,y_2}= e_{y_1} \otimes f_{y_2}$.
The probability of outcome $y_1,y_2$ on state $\phi_x=T_x \phi$ is 
$p(y_1,y_2\vert x)=(e_{y_1} \otimes f_{y_2})\cdot\phi_x$. 
The no-signalling principle implies that $p(y_2\vert x)=p(y_2)$, hence,
$ p(y_1,y_2\vert x)=p(y_2) p(y_1\vert y_2, x)$. Therefore,
$I(X:Y_1,Y_2)= I(X:Y_2)+I(X:Y_1\vert Y_2)=I(X:Y_1\vert Y_2)$, since $Y_2$ is independent of $X$.

We have that $p(y_1 \vert y_2, x )= p(y_2\vert x)^{-1} p(y_1,y_2\vert x)= e_{y_1}\cdot \omega_{x,y_2}$, where $\omega_{x,y_2}\equiv\bigl((u_A\otimes f_{y_2})\cdot \phi_x\bigr)^{-1}\phi_xf_{y_2}$. This is because $p(y_2\vert x)=\sum_{y_1}p(y_1,y_2\vert x)=(u_A\otimes f_{y_2})\cdot \phi_x$ and $p(y_1,y_2\vert x)= (e_{y_1}\otimes f_{y_2})\cdot \phi_x = e_{y_1}\cdot (\phi_xf_{y_2})$. We also have that $\omega_{x,y_2}\in\Omega_A$ because i) for any effect $e\in\mathcal{E}_A$ it holds that $e\cdot \omega_{x,y_2} \geq 0$, due to the fact that $(e\otimes f_{y_2})\cdot \phi_x \geq 0$, and ii) $u_A\cdot \omega_{x,y_2} = 1$, as follows from the definition of $\omega_{x,y_2}$. Thus, we have $I(X:Y_1\vert Y_2)\leq \chi_{\text{C}}(\Omega_A)$.
\end{proof}

We note that if $\mathcal{E}_{AB}=\mathcal{E}_A\otimes_{\text{min}}\mathcal{E}_B$, the most general measurements do not consist only of convex combinations of product measurements as above, but also include measurements in which each effect is a convex combination of product effects:
$\{ E_y= \sum_z q_z^{(y)} e^{(z)}_{y}\otimes f^{(z)}_{y}\vert  \sum_y E_y =u_{AB}\}$, with $e_{y}^{(z)}\in\mathcal{E}_A$,  $f_{y}^{(z)}\in\mathcal{E}_B$ (but $\{e_{y_1}^{(z)}\}$ and $\{f_{y_2}^{(z)}\}$ do not necessarily need to define independent measurements, i.e., we do not need to have $\sum_{y_1}e_{y_1}^{(z)}=u_A$ and $\sum_{y_2}f_{y_2}^{(z)} =u_B$). This set of measurements include in particular those in which the choice of measurement implemented on one of the systems depends on the outcome obtained on the other system (what is sometimes named as ``wirings''). For example, by setting $y\equiv (y_1,y_2)$ and $q_z^{(y_1,y_2)}= 1$ if $z=y_1$ and $q_z^{(y_1,y_2)}= 0$ otherwise, we obtain effects of the form $E_{y_1,y_2}=e_{y_1}\otimes f_{y_2}^{(y_1)}$. Another example of a measurement in this more general set that cannot be viewed as a convex sum of product measurements is given in quantum theory by the phenomenon of \emph {quantum nonlocality without entanglement} \cite{BDFMRSSW99}. We leave it as an open question whether superdense coding is possible with this class of measurements. A step towards proving this generalization of Proposition \ref{theorem0.5} is the fact that superdense coding is impossible if $\Omega_{AB}$ is a state space in generalized nonsignalling theory \cite{B07}, also called boxworld \cite{SB12}, where all nonsignalling states are allowed.

\subsection{Hyperdense coding}

Quantum theory allows superdense coding, since e.g. for the maximally entangled state $\phi$ of two qubits, $\chi_{\text{DC}}(\phi)=2\chi_{\text{C}}(\Omega_{A})$ where $\Omega_A$ is the state space of single qubits \cite{sdc}. However, superdense coding is limited in quantum theory as $\chi_{\text{DC}}(\phi)\leq 2\chi_{\text{C}}(\Omega_A)$ for any $\Omega_A$ and $\phi\in \Omega_{AB}$ . We call \emph{hyperdense} a dense coding protocol overcoming this quantum limitation.
\theoremstyle{definition}
\newtheorem*{HDC}{Hyperdense coding}
\begin{HDC}
\label{HDC}
A hyperdense coding (HDC) protocol is a \hyperref[DC]{dense coding} protocol implemented with a state $\phi\in\Omega_{AB}$ whose capacity satisfies $\chi_{\text{DC}}(\phi)>2\chi_{\text{C}}(\Omega_{A})$.
\end{HDC}

Note that if $\chi_{\text{C}}(\Omega_{A})=\chi_{\text{C}}(\Omega_{B})$, for example if $\Omega_A=\Omega_B$, then hyperdense coding implies superadditive classical capacities: $\chi_{\text{C}}(\Omega_{AB})\geq \chi_{\text{DC}}(\Omega_{AB})>2\chi_{\text{C}}(\Omega_{A})=\chi_{\text{C}}(\Omega_{A})+\chi_{\text{C}}(\Omega_{B})$, as follows from Proposition \ref{theochichi} and the given definitions.

\begin{proposition}
\label{HDCQT}
Hyperdense coding is impossible in quantum theory.
\end{proposition}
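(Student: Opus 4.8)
The plan is to bound the dense coding capacity of an arbitrary quantum state $\phi\in\Omega_{AB}$ by the Holevo quantity of the encoded ensemble, and to show this never exceeds $2\chi_{\text{C}}(\Omega_A)=2\log_2 d_A$, where $d_A=\dim\mathcal{H}_A$. Recall first that in quantum theory $\chi_{\text{C}}(\Omega_A)=\log_2 d_A$, as noted above. In a dense coding protocol Alice applies a channel $T_x$ to her half of $\phi$, producing the ensemble $\{p_x,\rho_x\}$ of joint states $\rho_x\equiv(T_x\otimes I_B)\phi$ on $AB$, and Bob decodes with a measurement on $AB$. By the Holevo bound, for every decoding measurement $I(X:Y)\le\chi\equiv S(\bar\rho)-\sum_x p_x S(\rho_x)$, with $\bar\rho\equiv\sum_x p_x\rho_x$ and $S$ the von Neumann entropy. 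It therefore suffices to prove $\chi\le 2\log_2 d_A$ for every such ensemble.

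The key structural input is that Alice's local, trace-preserving action cannot change Bob's marginal: $\rho_{x,B}\equiv\text{Tr}_A\rho_x=\text{Tr}_A\phi=\rho_B$ for all $x$ (equivalently, the no-signalling property). Consequently $\bar\rho_B=\rho_B$ and $\sum_x p_x S(\rho_{x,B})=S(\rho_B)$. I would then bound the two pieces of $\chi$ separately using standard entropy inequalities. Subadditivity gives $S(\bar\rho)\le S(\bar\rho_A)+S(\bar\rho_B)\le\log_2 d_A+S(\rho_B)$. The Araki--Lieb triangle inequality gives, for each $x$, $S(\rho_x)\ge S(\rho_{x,B})-S(\rho_{x,A})=S(\rho_B)-S(\rho_{x,A})\ge S(\rho_B)-\log_2 d_A$. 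Substituting both bounds,
\begin{equation}
\chi\le\bigl[\log_2 d_A+S(\rho_B)\bigr]-\bigl[S(\rho_B)-\log_2 d_A\bigr]=2\log_2 d_A=2\chi_{\text{C}}(\Omega_A).\nonumber
\end{equation}
Since this holds for every protocol, $\chi_{\text{DC}}(\phi)\le 2\chi_{\text{C}}(\Omega_A)$ for all $\phi$, contradicting the defining inequality $\chi_{\text{DC}}(\phi)>2\chi_{\text{C}}(\Omega_A)$ of a hyperdense coding protocol; hence none exists.

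The step I expect to require the most care is the treatment of general channels $T_x$ rather than unitary encodings. For unitaries the marginal invariance is immediate and $S(\rho_x)=S(\phi)$; for a generic CPTP map I would verify marginal invariance directly from trace preservation, writing $T_x$ in Kraus form with $\sum_k K_k^\dagger K_k=I$ so that $\text{Tr}_A(T_x\otimes I_B)\phi=\text{Tr}_A\phi$. Once this is established the two entropy inequalities apply verbatim, since subadditivity and Araki--Lieb hold for arbitrary joint states. The conceptual point worth flagging is that the tight factor of two comes precisely from pairing subadditivity (an upper bound on $S(\bar\rho)$) with Araki--Lieb (a lower bound on $S(\rho_x)$): using only nonnegativity of the entropies would give a strictly weaker bound that does not exclude hyperdense coding.
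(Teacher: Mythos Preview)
Your argument is correct. The Holevo bound combined with subadditivity for $S(\bar\rho)$ and Araki--Lieb for each $S(\rho_x)$, together with the no-signalling invariance $\rho_{x,B}=\rho_B$, cleanly gives $\chi\le 2\log_2 d_A$. Your remark that marginal invariance follows from trace preservation of $T_x$ (Kraus form with $\sum_k K_k^\dagger K_k=I$) is also correct and handles the CPTP case without further work.

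The paper proceeds differently. Its first argument reduces to a pure shared state $\phi$ by convexity of the mutual information, then uses the Schmidt decomposition: if $\dim\mathcal{H}_A=d$, the support of $\phi$ on the $B$ side has dimension at most $d$, and since Alice's local operations cannot change Bob's marginal, every encoded state $(T_x\otimes I_B)\phi$ lives in a subspace of $\mathcal{H}_A\otimes\mathcal{H}_B$ of dimension at most $d^2$; the ordinary Holevo bound on a $d^2$-dimensional system then gives $\chi_{\text{DC}}\le 2\log_2 d$. Its second argument invokes quantum information causality and the data-processing inequality. Compared with these, your route is more self-contained: it avoids the reduction to pure states and the dimension-counting via Schmidt decomposition, and it works uniformly for mixed $\phi$ and general CPTP encodings using only two textbook entropy inequalities. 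The paper's Schmidt-decomposition argument, on the other hand, makes the geometric reason for the factor of two (an effective $d^2$-dimensional system) more transparent.
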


\begin{proof}
We give two simple arguments within quantum theory. First, the convexity of the mutual information \cite{CTbook} and the convex decomposition of mixed states into pure states imply that the dense coding capacity $\chi_{\text{DC}}(\Omega_{AB})$ is achieved by pure states. Furthermore, in the case that the state $\phi\in\Omega_{AB}$ initially shared by Alice and Bob is pure and Alice's quantum system $A$ has Hilbert space dimension $d$, the Schmidt decomposition of pure states \cite{NielsenandChuangbook} implies that the dimension of the Hilbert space $\phi$ that is accessible by Alice's transformations on $A$ is no greater than $d^2$. Thus, we have that $\chi_{\text{DC}}(\Omega_{AB})\leq 2\log_2 d=2\chi_{\text{C}}(\Omega_{A})$.

For the second alternative argument, consider classical systems $C$ and $D$, initially uncorrelated to Alice's and Bob's joint quantum system $AB$, that record Alice's preparation $x$ and Bob's measurement outcome $y$, respectively. The principle of quantum information causality \cite{QIC} states that the quantum mutual information between system $C$ and the joint system $ABD$, after Alice's transmission of a qudit $A$ satisfies the bound $I_{\text{Q}}(C:ABD)\leq 2\log_2 d$, independently of how big the Hilbert space dimension of Bob's system $B$ might be. It is easy to see from the data-processing inequality \cite{NielsenandChuangbook} that $I(X:Y)\leq I_{\text{Q}}(C:ABD)$, from which follows that $\chi_{\text{DC}}(\Omega_{AB})\leq 2\log_2 d=2\chi_{\text{C}}(\Omega_{A})$.
\end{proof}

\subsection{Hyperdense coding in hypersphere theories}
\label{HDCsec}
We now introduce a dense coding protocol in hypersphere theories of dimension $n=2^N-1$ which is hyperdense for $N>2$. 
Alice and Bob initially share the entangled state $\phi_{\bold{0}}=I\in\mathcal{S}_N\subset\Omega_{AB}$, which is the identity matrix in $\mathbb{R}^{2^N}\otimes\mathbb{R}^{2^N}$, as defined by (\ref{eq:43.310}).
We consider messages $x\in\{0,1\}^N$. To encode $x$,
Alice applies the local transformation $T_x\in\mathcal{T}_N$ given by (\ref{eq:43.51}), which encodes the message $x\in\{0,1\}^N$. From (\ref{eq:43.51}) and (\ref{eq:43.52}), the joint state transforms into $T_x\phi_{\bold{0}}=\phi_x\in\mathcal{S}_N$. Alice sends the system $A$ to Bob. Bob applies the measurement $M=\{E_y\}_{y\in\{0,1\}^N}$ given by (\ref{eq:43.M}) on the joint system $AB$. The probability that Bob obtains outcome $y$ when Alice encodes the message $x$ is $p(y\vert x)=E_y\cdot\phi_x=\delta_{y,x}$. Thus, Bob decodes Alice's $N$ bit message perfectly. 
This shows that 
\begin{equation}
\chi_{\text{DC}}(\Omega_{AB})\geq N\ .
\label{chDCAA}
\end{equation}
Since the classical capacity of individual systems in HSTs is 1 bit, this provides an example of hyperdense coding for $N>2$.
By taking $N\rightarrow\infty$, Bob can learn an arbitrarily large amount of information by receiving a system that locally can only encode up to one bit.

The above dense coding protocol can be turned into a classical communication protocol in which Alice encodes message $x$ in the state $\phi_x$ of two HST systems, and Bob decodes the message perfectly using measurement $M=\{E_y\}_{y\in\{0,1\}^N}$. Hence (see Proposition \ref{theochichi})
\begin{equation}
2N\geq \chi_{\text{C}}(\Omega_{AB})\geq \chi_{\text{DC}}(\Omega_{AB})\geq N\ .
\label{chCCAA}
\end{equation}
Even though each system has capacity of $1$ bit, together they have capacity
 of at least $N$ bits, with arbitrary $N\in\mathbb{N}$. Thus, in the theories defined above, the classical capacities are superadditive.

The upper bound of $2N$ (the left hand side of (\ref{chCCAA})) follows from the fact that the classical capacity of any GPT is bounded by the $\log_2$ of the dimension of the state space \cite{FMPT13}, which in the case of entangled states for a pair of HST systems, as defined above, is $2N$. Thus, the gap we exhibit between single  system classical capacity and two system classical capacity is close to optimal.

\subsection{Teleportation and entanglement swapping in hypersphere theories}
Quantum teleportation \cite{teleportation} and entanglement swapping \cite{ZZHE93} are fundamental protocols of quantum information theory. We show how they can also be realized in the context of hypersphere theories considered here. We notice that teleportation and entanglement swapping in GPTs have been studied before \cite{B07,BBLW08,SB12}. In particular, it has been shown that boxworld, the theory admitting all nonsignalling correlations, does not have teleportation or entanglement swapping \cite{B07,SB12}. General conditions on GPTs to support teleportation are given in \cite{BBLW08}.

We consider the following protocol. Alice has a system $A'$ in a pure state $\omega_{a}\in\Omega_{A'}$ that she wants to teleport to Bob's location. To do so, Alice and Bob initially share an entangled state $\phi_{\bold{0}}\in\mathcal{S}_N\subset\Omega_{AB}$ given by (\ref{eq:43.310}) in the systems $A$ and $B$, at Alice's and Bob's locations, respectively. Alice applies the measurement defined by the entangled effects $E_x\in\mathcal{F}_N\subset\mathcal{E}_{AB}$ given by (\ref{eq:43.M}) on her joint system $A'A$ and obtains the outcome $x$ with probability $p_x=2^{-N}$, for $x\in\{0,1\}^N$. Alice sends Bob her outcome. Bob applies the correction operation $T_x\in\mathcal{T}_N$ given by (\ref{eq:43.51}) on his system $B$ and obtains, as we show below, the teleported state $\omega_{a}$ on $B$ with unit probability. From the linearity of the theory, this protocol works too if Alice's input state $\omega_{a}$ is mixed. In particular, the system $A'$ can be in an entangled state $\varphi\in\Omega_{A'C}$ with another system $C$, leading to entanglement swapping.  Furthermore, we show that for theories with bipartite state spaces $\Omega_{AB}$ described above, our protocols for teleportation and entanglement swapping are optimal in the amount of classical information sent by Alice, that is, teleportation or entanglement swapping cannot be achieved perfectly with a classical channel of capacity smaller than $N$ bits.

Let us now show that these protocols for teleportation and entanglement swapping work as claimed.  To do so, we first define a consistent state space for a tripartite system: $\Omega_{A'AB}\equiv\text{convex hull }\{\Omega_{A'}\otimes_{\text{min}}\Omega_{AB},\Omega_{A}\otimes_{\text{min}}\Omega_{A'B},\Omega_{B}\otimes_{\text{min}}\Omega_{A'A}\}$, where $\Omega_{AB}\equiv\text{convex hull }\{\Omega_{A}\otimes_{\text{min}}\Omega_{B},\mathcal{S}_N\}$ and $\mathcal{S}_N\equiv\{\phi_\mu\}_{\mu\in\{0,1\}^N}$, as before. We define the four party state space in a similar way, $\Omega_{A'ABC}\equiv\text{convex hull }\{\Omega_{A'AB}\otimes_{\text{min}}\Omega_{C},\Omega_{A'AC}\otimes_{\text{min}}\Omega_B,\Omega_{ABC}\otimes_{\text{min}}\Omega_{A'},\Omega_{A'BC}\otimes_{\text{min}}\Omega_A\}$. We notice that the three and four party state spaces are symmetric under any permutation of the systems, all product states are included, and the set of bipartite entangled states $\mathcal{S}_N$ is included in any bipartition. It follows that the set of global effects $\mathcal{E}_{A'ABC}$ includes all the product effects $E\otimes E'$, where $E'\in\mathcal{E}_{BC}$ and $E\in\mathcal{E}_{A'A}$, in particular for the bipartite entangled effects $E,E'\in \mathcal{F}_N$, for any permutation of the systems $A'$, $A$, $B$ and $C$.

The state $\omega_{a}$ is successfully teleported to system $B$ if, for any measurement on $B$, the outcome probabilities are those predicted by $\omega_a$. In our notation, this means that
\begin{eqnarray}
\label{eq:071}
p_{\text{tel}}(y\vert x)&\equiv& \frac{1}{p_x}\bigl( (E_x)_{A'A}\otimes (e_{y})_{B}\bigr)\cdot\bigl((\omega_{a})_{A'}\otimes(\phi T_x^{\text{t}})_{AB}\bigr)\nonumber\\
&=&e_{y}\cdot \omega_{a},
\end{eqnarray}
for any effect $e_y\in\mathcal{E}_B$, where the label `$\text{tel}$' denotes that these are Bob's outcome probabilities after performing the teleportation protocol given above.

We show that the probability that Alice and Bob obtain respective outcomes  corresponding to the effects $E_x\in\mathcal{E}_{A'A}$ and $e_y\in\mathcal{E}_B$ is 
\begin{eqnarray}
\label{eq:92}
p_{\text{tel}}(x,y)&\equiv& \bigl( (E_x)_{A'A}\otimes (e_{y})_{B}\bigr)\cdot\bigl((\omega_{a})_{A'}\otimes(\phi_{\bold{0}} T_x^{\text{t}})_{AB}\bigr)\nonumber\\
&=&2^{-N}e_{y}\cdot \omega_{a}.
\end{eqnarray}
Since the probability that Alice obtains outcome $x$ is $p_x\equiv\sum_{y}p_{\text{tel}}(x,y)$ and since $\sum_{y}e_y=u$, we have from (\ref{eq:92}) that $p_x=2^{-N}$. Thus, we have $p_{\text{tel}}(y\vert x)\equiv \frac{1}{p_x}p_{\text{tel}}(x,y)=e_{y}\cdot \omega_{a}$, as given by (\ref{eq:071}).

We show (\ref{eq:92}). Since the states $\phi_x$ and the local transformations $T_x$ are diagonal matrices, as given by (\ref{eq:43.310}) and (\ref{eq:43.51}), we have from (\ref{eq:43.51}) and (\ref{eq:43.52}) that $\phi_{\bold{0}}T_x^{\text{t}}=\phi_x$. Using the indices $i$, $j$ and $k$ for the systems $A'$, $A$ and $B$, respectively, we obtain from (\ref{eq:92}) that
\begin{eqnarray} 
\label{eq:73}
p_{\text{tel}}(x,y)&\equiv&\sum_{i,j,k\in\{0,1\}^N} (E_x)_{i,j}(e_y)_k (\omega_{a})_i (\phi_x)_{j,k}\nonumber\\
&=&2^{-N}\sum_{i,j,k\in\{0,1\}^N} (\phi_x)_{i,j}(e_y)_k (\omega_{a})_i (\phi_x)_{j,k}\nonumber\\
&=&2^{-N}\sum_{i,j,k\in\{0,1\}^N} (\phi_x)_{i,i}\delta_{i,j}(e_y)_k (\omega_{a})_i (\phi_x)_{j,j}\delta_{j,k}\nonumber\\
&=&2^{-N}\sum_{k\in\{0,1\}^N} (\phi_x)_{k,k}(e_y)_k (\omega_{a})_k (\phi_x)_{k,k}\nonumber\\
&=&2^{-N}e_{y}\cdot\omega_{a},
\end{eqnarray}
where in the second line we used the definition for the effects $E_x$ given by (\ref{eq:43.FN}), in the third line we used that the states $\phi_x$ are diagonal matrices, as given by (\ref{eq:43.310}), and in the last line we used that the entries of $\phi_x$ are $1$ or $-1$, as given by (\ref{eq:43.0}) and (\ref{eq:43.310}).

The introduced teleportation protocol can easily be extended to entanglement swapping. Let Alice's system $A'$ be initially in an entangled state $\varphi\in\Omega_{A'C}$ with a system $C$, held by Charlie. As in the teleportation protocol, Alice and Bob share the entangled state $\phi_{\bold{0}}\in\Omega_{AB}$ in systems $A$ and $B$. Then, Alice and Bob perform the teleportation protocol described on the tripartite system $A'AB$. 

The entangled state $\varphi$ is successfully swapped from systems $A'C$ to systems $BC$ if the outcome probabilities of any measurement on the joint system $BC$ are those predicted by the state $\varphi$. The probability that Alice obtains the outcome $E_x\in\mathcal{E}_{A'A}$ on the system $A'A$ and Bob obtains (by collaborating with Charlie) the outcome $E'_y\in\mathcal{E}_{BC}$ on the system $BC$, after he applies the correction $T_x$ on $B$, is
\begin{equation}
\label{eq:74}
p_{\text{swap}}(x,y)\equiv\bigl( (E_x)_{A'A}\otimes (E'_{y})_{BC}\bigr)\cdot\bigl((\varphi)_{A'C}\otimes(\phi_{\bold{0}} T_x^{\text{t}})_{AB}\bigr).
\end{equation}
Entanglement swapping succeeds if $p_{\text{swap}}(y\vert x)\equiv\frac{1}{p_x}p_{\text{swap}}(x,y)=E_y'\cdot \varphi$ for all  $\varphi\in\Omega_{A'C}$ and $E_{y}'\in\mathcal{E}_{BC}$, where $p_x\equiv\sum_{y}p_{\text{swap}}(x,y)>0$. This follows straightforwardly as in (\ref{eq:73}), with $p_x=2^{-N}$.

Finally, we show that for theories with bipartite state spaces $\Omega_{AB}$, teleportation or entanglement swapping cannot be perfectly implemented with a classical channel of capacity of less than $N$ bits. We show this property for entanglement swapping. Since the linearity of the theory allows us to extend a protocol for teleportation to a protocol for entanglement swapping, as we did above, this property holds for teleportation too. We follow the argument given in \cite{teleportation}. Assume that Alice and Bob implement the dense coding protocol of section \ref{HDCsec} with an initial state $\phi_{\bold{0}}$ in systems $A$ and $B$ at Alice's and Bob's locations, respectively. The state is  transformed into $\phi_x$ after Alice's operation, encoding the $N$ bit message $x$. Suppose now that Alice does not send Bob the system $A$, but that Alice and Bob apply some entanglement swapping protocol, using some resource state in systems $A'$, held by Alice, and $B'$, held by Bob, so that at the end of the protocol Bob has the state $\phi_x$ in his systems $B'B$. Then, Bob completes the dense coding protocol and learns the message $x$. Assume that such a protocol for entanglement swapping requires Alice to encode Bob's correction operation in a message $y$ via a classical channel with capacity of $M<N$ bits. Now suppose that Alice does not actually send $y$ to Bob, but that Bob instead tries to guess the value of $y$ to complete the protocol without any communication. Then, Bob will guess $y$ correctly with probability at most $2^{-M}$, in which case he completes entanglement swapping successfully and then learns $x$. Thus, Bob obtains Alice's $N$ bit message $x$ with probability $2^{-M}>2^{-N}$ without any communication from Alice to Bob. This means that, from the set of $2^N$ messages of $N$ bits that Alice can encode, there exists at least a pair of messages that Bob can distinguish with a nonzero probability without receiving any physical system from Alice, which violates the no-signalling principle. It follows that a classical channel of capacity $N$ bits is necessary to complete entanglement swapping or teleportation.

\section{Discussion}
Here we have introduced hyperdense coding: superdense coding in which more than two bits of information can be communicated by transmission of a system that locally encodes at most one bit. We have presented dense coding protocols in the context of hypersphere theories, in which single systems are described by an Euclidean ball of dimension $n$. Our protocols are hyperdense when $n>3$.

It is well known that if one imposes a sufficient set of axioms to GPTs then one recovers classical or quantum theory \cite{H01,DB09,Chiribella11,MM11,MMAP13}. Therefore, the theories we have introduced must violate at least one of these axioms. The theories we introduced violate continuous reversibility. Continuous reversibility is an important physical condition that has been imposed to the framework of GPTs in several derivations of finite dimensional quantum theory \cite{H01,DB09,MM11,MMAP13}. Continuous reversibility states that for every pair of pure states there exists a continuous reversible transformation that transforms one into the other \cite {H01}.

The theories that we introduced not only violate \emph{continuous reversibility}, but already simply \emph{reversibility}. Indeed, the set of transformations acting on a single system is discrete, see Eq. (\ref{eq:43.51}), while the set of local pure states is continuous, given by a hypersphere. Thus, not every pair of pure states describing a single system is connected by a transformation. 

Note further that if we modify the group of transformations $\hat{\mathcal{T}}$ for the theories given above to be $\hat{\mathcal{T}}=\text{SO}(n)$ so that continuous reversibility is satisfied (at least for local systems), then the space of entangled states must be modified and the hyperdense coding protocols we introduced no longer work. In fact, we obtain in this case that less than one bit is communicated if $n\neq 3$, while $2$ bits are communicated if $n=3$, which corresponds to the Bloch ball (see details in \hyperref[app:continuoushdc]{Appendix \ref{app:continuoushdc}}). More generally, we have shown, under a variety of technical conditions, that for arbitrary hypersphere theories, hyperdense coding is impossible if one imposes continuous reversibility on local systems (see \hyperref[app:theorem]{Appendix \ref{app:theorem}}). 

In general, continuous reversibility not only applies to single systems with state spaces $\Omega_A$ or $\Omega_B$, but also to composite systems in $\Omega_{AB}$, i.e., for  every pair of pure states in $\Omega_{AB}$ there should exist a continuous reversible transformation that transforms one into the other. In line with the above remarks, note that imposing continuous reversibility for composite systems implies that superdense coding is possible in HSTs only if $n=3$, which corresponds to the Bloch ball. This follows from the result of \cite{MMAP13} showing that for a pair of HST systems satisfying continuous reversibility, entangled states only exist for the case $n=3$. While our theories must thus obviously violate continuous reversibility, note that it is an open question whether they satisfy the weaker requirement that there merely exists a transformation mapping one of the product state to one of the entangled states used in our protocol. 

In addition to continuous reversibility, tomographic locality is another physical conditions that has been used to derive finite dimensional quantum theory in the framework of GPTs \cite{H01,DB09,MM11,MMAP13}. Tomographic locality is a physical property stating that the description of entangled states is completely fixed by the outcome probabilities of local measurements \cite{H01,B07}. Violation of tomographic locality means that there are global degrees of freedom that are inaccessible to local observers. The theories that we have discussed satisfy tomographic locality. It is not difficult to build theories that exhibit hyperdense coding and that satisfy local continuous reversibility, but that violate tomographic locality (see \hyperref[app:lcrtlhdc]{Appendix \ref{app:lcrtlhdc}}).

We remark that while hyperdense coding must necessarily entail a violation of one of the physical axioms defining quantum theory, such as tomographic locality or continuous reversibility, it is not necessarily the case that any theory violating one of these axioms always allows for the existence of hyperdense coding, even if it includes entangled states. Intuitively, one can construct theories in which the entangled states are very weekly entangled
and thus do not achieve hyperdense coding. We give examples of this type in \hyperref[lastapp]{Appendix \ref{lastapp}}, where we construct theories whose local systems are described by HSTs and whose bipartite state spaces include a set of nontrivial entangled states. These theories satisfy tomographic locality and violate continuous reversibility, for local and global systems. We show that some of these theories do not have hyperdense coding, while others do not even have superdense coding.

On a more general level, our hyperdense coding protocols have highly unphysical consequences. Indeed, they imply superadditive capacities: the classical capacity of the bipartite system $AB$ is greater than the sum of the local capacities of $A$ and $B$. A breakdown of additivity suggests that in such theories one cannot define a unit of information, and hence that the whole framework of information theory breaks down. 
(We note that in quantum theory there is a weak breakdown of additivity, namely the capacity of a specific channel can be superadditive \cite{H09}. This has much less dramatic consequences than the superadditivity discussed here).

Another possible consequence is related to thermodynamics. Indeed, the entropy of a state $\omega$ can be defined, as briefly suggested in \cite{BBCLSSWW10}, as the maximum classical capacity of any encoding/decoding protocol in which  Alice sends Bob the state $\omega=\sum_x p_x \omega_x$ on average. This definition coincides with the Shannon and von Neumann entropies in the classical and quantum cases. It differs however from other definitions of entropy in GPTs \cite{BBCLSSWW10,SW10}. Our hyperdense coding protocols show that using the definition based on classical capacity, entropy can be superadditive in GPTs. This suggests that statistical mechanics could not be applied to these theories, and that they would not have  a macroscopic limit.

For these reasons, it would be  interesting to investigate whether additivity of the classical capacity should be taken as a basic physical condition for any reasonable theory. We have suggested that it is related to other properties, including tomographic locality and continuous reversibility.  What the detailed relation is, and whether additivity of classical capacities can replace other physical conditions previously explored to distinguish quantum theory, is an open question. 

\section*{Acknowledgments}
We acknowledge financial support from the European Union under the project QALGO, from the F.R.S.-FNRS under the project DIQIP, from the InterUniversity Attraction Poles through project Photonics@be, and from the Brussels-Capital Region through a BB2B grant.
S. P. is a Research Associate of the Fonds de la Recherche Scientifique F.R.S.-FNRS (Belgium).

\appendix
\section{Locally continuous hypersphere theories}
\label{app:continuoushdc}

We define precisely the notion of continuous reversibility for local systems.

\theoremstyle{definition}
\newtheorem*{LCT}{Local Continuous Reversibility (LCR)}
\begin{LCT}
\label{LCT}
For a bipartite system $AB$, any pair of pure states for the local system $A$, or $B$, is connected by a continuous reversible transformation.
\end{LCT}

We investigate the implications of imposing \hyperref[LCT]{local continuous reversibility} on the dense coding protocol given in the main text. In order to ensure the \hyperref[consistency]{consistency} condition, we modify the space of states and effects in a minimal way, in terms of two parameters $\lambda$ and $\tau$. 
Recall that $n=2^N-1$ is the dimension of the local HST systems, with $N\in\mathbb{N}$.
We show that if we impose satisfaction of local continuous reversibility then our dense coding protocols communicate two bits only for $N=2$ and no more than one bit for $N\neq 2$, hence, they are superdense only for $N=2$, which corresponds to the Bloch sphere.

We consider entangled states and effects similar to the ones in the main text, but modified in terms of two parameters $\lambda$ and $\tau$:
 \begin{equation}
\label{eq:47}
\phi_\mu^{(\lambda)}=
 \begin{pmatrix}
1 & \bold{0} \\
 \bold{0}  & \lambda\hat{T}_\mu
 \end{pmatrix},
\qquad
E_\mu^{(\tau)} \equiv 2^{-N} 
\begin{pmatrix}
1 & \bold{0} \\
 \bold{0}  & \tau\hat{T}_\mu
 \end{pmatrix},
\end{equation}
where $\hat{T}_\mu$ is defined as in Eq. (\ref{eq:43.53}) of the main text, for $\mu\in\{0,1\}^N$ and some $\lambda,\tau\in[-1,1]$ that we specify below.  Notice that the case $\lambda=\tau=1$ corresponds to the states and effects given by Eqs. (\ref{eq:43.310}) and (\ref{eq:43.FN}) of the main text. Equations (\ref{eq:43.2}), (\ref{eq:43.310}), (\ref{eq:43.51}) and (\ref{eq:43.53}) of the main text, and Eq. (\ref{eq:47}) imply that $\sum_{\mu\in\{0,1\}^N} E_\mu^{(\tau)} =\text{diag} (1,\bold{0})$, which is the unit effect $u_{AB}$, for any value of $\tau$. 

We impose that local continuous reversibility must be satisfied. We must have that $\mathcal{T}_A=\mathcal{T}_B=\mathcal{T}$, where $\mathcal{T}$ is defined as the set of transformations of the form
\begin{equation}
\label{eq:48}
T=
\begin{pmatrix}
1 & \bold{0} \\
 \bold{0}  & \hat{T}
 \end{pmatrix},
\end{equation}
with $\hat{T}\in\hat{\mathcal{T}}$ and $\hat{\mathcal{T}}$ being a group of continuous reversible  transformation that is transitive on the sphere in $\mathbb{R}^{2^N-1}$ \cite{MMAP13}. It follows that $\hat{\mathcal{T}}$ is a subgroup of $\text{SO}(2^N-1)$ \cite{MMPA14}. Since $\text{SO}(1)=\{1\}$, there are no continuous reversible transformations for the case $N=1$. In fact, the case $N=1$ corresponds to the classical bit, whose transformations are discrete and correspond to $\hat{\mathcal{T}}=\text{O}(1)=\{1,-1\}$. Then, there is no superdense coding for $N=1$ since superdense coding does not exist in classical probabilistic theory. Thus, in the following, we  only consider $N\geq 2$. For $N\neq 3$, the only possible group of continuous reversible transformations that is transitive on the sphere is $\hat{\mathcal{T}}=\text{SO}(2^N-1)$, while for $N=3$, $\hat{\mathcal{T}}$ can be either $\text{SO}(7)$ or G$_2$ \cite{MMPA14}. We consider here that $\hat{\mathcal{T}}=\text{SO}(2^N-1)$ for all $N\in\mathbb{N}/\{1\}$.

For a given $\lambda$, in order to have that the allowed local transformations satisfy the \hyperref[consistency]{consistency} condition, it must be that $T\phi_\mu^{(\lambda)} T'^{\text{t}}\in\Omega_{AB}$, for all $T,T'\in\mathcal{T}$ and $\mu\in\{0,1\}^N$. We thus define $\Omega_{AB}\equiv \text{convex hull}\{\Omega_A\otimes_{\text{min}}\Omega_B,\Lambda^{(\lambda)}\}$, where $\Lambda^{(\lambda)}\equiv \{T\phi_{\bold{0}}^{(\lambda)} T'^{\text{t}} \vert T,T'\in\mathcal{T}\}$ is a continuous set of entangled states, differently to the discrete set $\mathcal{S}_N$ defined by the states (\ref{eq:43.310}) of the main text. We notice that $\phi_\mu^{(\lambda)}\in\Lambda^{(\lambda)}$ for all $\mu\in\{0,1\}^N$. Since $\lvert \lambda \rvert \leq 1$, following Eq. (\ref{eq:43.31}) of the main text, it is straightforward to show that $\Omega_{AB}\subseteq\Omega_A\otimes_{\text{max}}\Omega_B$.

We show that $\{E_\mu^{(\tau)}\}_{\mu\in\{0,1\}^N}\subset\mathcal{E}_{AB}$ requires
\begin{equation}
\label{eq:49}
-\bigl(2^N-1\bigr)^{-1}\leq \lambda\tau\leq \bigl(2^N-3\bigr)^{-1}.
\end{equation}
Consider the states $\phi_{\bold{0}}^{(\lambda)}$ and $\phi^{(\lambda)}=\phi_{\bold{0}}^{(\lambda)} T'$, where $T'=\text{diag}(1,1,-1,-1\ldots,-1)\in\mathbb{R}^{2^N}\otimes\mathbb{R}^{2^N}$. We see that $T'$ is of the form (\ref{eq:48}) with $\hat{T'}\in\text{SO}(2^N-1)$. We have that
\begin{eqnarray}
\label{eq:49.1}
E^{(\tau)}_{\bold{0}}\cdot \phi_{\bold{0}}^{(\lambda)}&=& 2^{-N}\Bigl(1+\bigl(2^N-1\bigr)\lambda\tau \Bigr),\nonumber\\
E^{(\tau)}_{\bold{0}}\cdot \phi^{(\lambda)}&=& 2^{-N}\Bigl(1-\bigl(2^N-3\bigr)\lambda\tau \Bigr).
\end{eqnarray}
Since $\phi^{(\lambda)}_{\bold{0}}, \phi^{(\lambda)}\in\Omega_{AB}$, the condition $E^{(\tau)}_{\bold{0}}\in\mathcal{E}_{AB}$ requires $0\leq E^{(\tau)}_{\bold{0}}\cdot \phi^{(\lambda)}_{\bold{0}}\leq 1$ and $0 \leq E^{(\tau)}_{\bold{0}}\cdot \phi^{(\lambda)}\leq 1$, which from (\ref{eq:49.1}) implies (\ref{eq:49}).

From Eqs. (\ref{eq:43.3}), (\ref{eq:43.310}), (\ref{eq:43.51}) and (\ref{eq:43.53}) of the main text, and Eq. (\ref{eq:47}), Bob's outcome probabilities are
\begin{eqnarray}
\label{eq:50}
p(y\vert x)&=&E^{(\tau)}_y\cdot \phi^{(\lambda)}_x\nonumber\\
&=& 2^{-N}(\tau\lambda d_y\cdot d_x+1-\lambda\tau)\nonumber\\ 
&=& 2^{-N}(2^N\tau\lambda \delta_{y,x}+1-\lambda\tau)\nonumber\\
&=&\lambda\tau\delta_{y,x}+2^{-N}(1-\lambda\tau).
\end{eqnarray}
Satisfaction of (\ref{eq:49}) implies that $0\leq p(y\vert x)\leq 1$. The probability that Bob learns the correct message is
\begin{eqnarray}
\label{eq:51}
p(y=x\vert x)&=&P_{\lambda,\tau,N},\nonumber\\
&\leq&Q_N,
\end{eqnarray}
where
\begin{eqnarray}
\label{eq:51.1}
P_{\lambda,\tau,N}&\equiv&2^{-N}\bigl(1+(2^N-1)\lambda\tau\bigr),\\
\label{eq:51.2}
Q_N&\equiv& 2^{-N+1}(2^N-2)(2^N-3)^{-1}.
\end{eqnarray}
If $y\neq x$, we have $p(y \vert x)= (2^N-1)^{-1}(1-P_{\lambda,\tau,N})$. Given that $p_x=2^{-N}$ for $x\in\{0,1\}^N$, it is straightforward to obtain that
\begin{equation}
\label{eq:52}
I(X:Y)=N-H(P_{\lambda,\tau,N}),
\end{equation}
where 
$H(p)\equiv h(p)+(1-p)\log_2{(2^N-1)}$ is the Shannon entropy of a random variable $Z$ taking values $Z=0$ with probability $p$ and $Z=j$ with probability $(2^N-1)^{-1}(1-p)$, for $j=1,2,\ldots,2^N-1$, and $h(p)\equiv-p\log_2{p}-(1-p)\log_2(1-p)$ is the binary entropy. The maximum value of $I(X:Y)$ is achieved by minimizing $H(P_{\lambda,\tau,N})$, which is obtained for $P_{\lambda,\tau,N}=Q_N$. We thus consider $\lambda\tau=(2^N-3)^{-1}$, in which case we have $P_{\lambda,\tau,N}=Q_N$. Thus, from (\ref{eq:52}), the maximum value of $I(X:Y)$ for the considered protocols is
\begin{equation}
\label{eq:53}
I(X:Y)=N-H(Q_N).
\end{equation}
The achieved values of $I(X:Y)$ for $N=2,3,4,5$ are $2,0.15,0.05,0.02$, respectively. In general, we see from (\ref{eq:51.2}) and (\ref{eq:53}) that $I(X:Y)$ decreases with $N$. Thus, our protocols achieve superdense coding only for $N=2$.

\section{Communication capacities of locally continuous hypersphere theories}
\label{app:theorem}

We prove the following  results stating the impossibility of superadditive capacities, hyperdense and superdense coding for particular classes of bipartite HST systems.

\begin{proposition}
\label{theorem2}
Let $\Omega_{AB}$ be a bipartite state space satisfying \hyperref[NS]{the no-signalling principle}, \hyperref[TL]{tomographic locality} and \hyperref[LCT]{local continuous reversibility} with $\Omega_A\simeq\Omega_B\simeq\Omega$ a hypersphere system of dimension $n\geq 2$. Consider states $\phi_x\in\Omega_{AB}$ that encode a message $x$ with probability $p_x$, whose local states are $\omega_{a_x}=\bigl(\begin{smallmatrix}
1\\ a_x
\end{smallmatrix} \bigr)\in\Omega_A$ and $\omega_{b_x}=\bigl(\begin{smallmatrix}
1\\ b_x
\end{smallmatrix} \bigr)\in\Omega_B$, with $a_x,b_x\in\mathbb{R}^{n}$ and $\lVert a_x\rVert\leq 1$, $\lVert b_x\rVert\leq 1$. Let $y$ be the outcome of a measurement on the state $\phi_x$ corresponding to the effect $E_y\in\mathcal{E}_{AB}$ and let $I(X:Y)$ be the mutual information between $x$ and $y$. Let $\mathcal{T}$ be the set of allowed local transformations on $\Omega_A$ and $\Omega_B$. In general, their elements are of the form
\begin{equation}
\label{eq:a0}
T\equiv 
 \begin{pmatrix}
1 & \bold{0} \\
 \bold{0}  & \hat{T}
 \end{pmatrix},
\end{equation}
where the matrices $\hat{T}$ form the group $\hat{\mathcal{T}}$. The following properties hold:

i) for states with $a_x=\bold{0}$ or $b_x=\bold{0}$, we have $I(X:Y)\leq 2$ for odd $n\neq 7$, with equality achieved only for $n=3$;

ii) if $\hat{\mathcal{T}}=\text{SO}(n)$ and $n$ is odd then $I(X:Y)\leq 2$ for $a_x =\bold{0}$ or $b_x=\bold{0}$, with equality achieved only for $n=3$;

iii) if $\hat{\mathcal{T}}=\text{SO}(n)$ and $n$ is even then $I(X:Y)\leq 2$ for arbitrary $\phi_x\in\Omega_{AB}$, and $I(X:Y)\leq 1$ for states with $a_x=\bold{0}$ or $b_x=\bold{0}$;

iv) if $\hat{\mathcal{T}}=\text{O}(n)$, which includes continuous and discontinuous transformations, we have for any $n$ that, $I(X:Y)\leq 2$ for arbitrary $\phi_x\in\Omega_{AB}$ and $I(X:Y)\leq 1$ for states with $a_x=\bold{0}$ or $b_x=\bold{0}$;

v) for states that are obtained by local transformations $\phi_x=T_x\phi$, we have $I(X:Y)< 2$ if $\hat{\mathcal{T}}=\text{SO}(n)$ with even $n$, or if $\hat{\mathcal{T}}=\text{O}(n)$ with arbitrary $n$. 

Recalling that the classical capacity of a single system in hypersphere theories is 1 bit, no protocol using the above families of states can achieve superadditivity of classical capacities, hyperdense coding or superdense coding, for the respective cases.
\end{proposition}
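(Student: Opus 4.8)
The plan is to reduce the whole proposition to a single inequality of the form $I(X:Y)\le D$, where $D$ is a maximisation of $\log_2(1+\text{(a state--effect pairing)})$, exactly as in Proposition \ref{theorem1A}, and then to bound that pairing using the geometry forced by local continuous reversibility. First I would generalise the argument of Proposition \ref{theorem1A} to the bipartite setting. Writing the encoding states and decoding effects in the block forms (\ref{eq:011}) and (\ref{eq:012}), namely $\phi_x=\bigl(\begin{smallmatrix}1 & b_x^{\text{t}}\\ a_x & C_x\end{smallmatrix}\bigr)$ and $E_y=\bigl(\begin{smallmatrix}\gamma_y & \beta_y^{\text{t}}\\ \alpha_y & \Gamma_y\end{smallmatrix}\bigr)$, the outcome probability is $p(y\vert x)=E_y\cdot\phi_x=\gamma_y\bigl(1+\alpha_y'\cdot a_x+\beta_y'\cdot b_x+\langle\Gamma_y',C_x\rangle\bigr)$, where $\alpha_y'\equiv\alpha_y/\gamma_y$, $\beta_y'\equiv\beta_y/\gamma_y$, $\Gamma_y'\equiv\Gamma_y/\gamma_y$, and $\langle\cdot,\cdot\rangle$ is the Frobenius pairing. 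The normalisation $\sum_y E_y=u_{AB}$ gives $\sum_y\gamma_y=1$, $\sum_y\alpha_y=\sum_y\beta_y=\mathbf{0}$ and $\sum_y\Gamma_y=0$, which are exactly the conditions needed for the two-line convexity argument of Proposition \ref{theorem1A} (the estimates $L_1\le D$ and $L_2\ge0$) to go through verbatim, since $\bar\phi\equiv\sum_x p_x\phi_x\in\Omega_{AB}$ guarantees $z_y\equiv p(y\vert\bar\phi)/\gamma_y\ge0$ with $\sum_y\gamma_y z_y=1$. Hence $I(X:Y)\le D$ with $D\equiv\max_{x,y}\log_2\bigl(1+\alpha_y'\cdot a_x+\beta_y'\cdot b_x+\langle\Gamma_y',C_x\rangle\bigr)$, and everything reduces to bounding the pairing: $I(X:Y)\le2$ is equivalent to pairing $\le3$, and $I(X:Y)\le1$ to pairing $\le1$.

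Next I would dispose of part i) by invoking the classification of compact connected groups acting transitively on spheres. Local continuous reversibility forces $\hat{\mathcal{T}}$ to be such a group acting on $S^{n-1}$, hence a subgroup of $\text{SO}(n)$ of that type. For odd $n$ the sphere $S^{n-1}$ is even-dimensional, so the unitary and symplectic series (which require $n$ even) are excluded, and the only transitive possibilities are $\text{SO}(n)$ together with the single exceptional action of $G_2$ on $S^{6}$ (i.e. $n=7$). This is precisely why $n=7$ is omitted from part i): for every other odd $n$ one must have $\hat{\mathcal{T}}=\text{SO}(n)$, so part i) is a corollary of part ii).

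The core of the argument, and the main obstacle, is bounding the pairing in parts ii)--iv), where the interplay between the state and the effect constraints must be controlled through $\hat{\mathcal{T}}$. On the state side, the inclusion $\Omega_{AB}\subseteq\Omega_A\otimes_{\text{max}}\Omega_B$ forces $\phi_x$ to assign non-negative probabilities to all product effects $e_\alpha\otimes e_\beta$; for $a_x=\mathbf{0}$ this yields $\lVert C_x\beta\rVert\le1+\beta\cdot b_x$ for every unit vector $\beta$, and in particular the operator-norm bound $\lVert C_x\rVert\le1$ when $b_x=\mathbf{0}$. On the effect side, $E_y\in\mathcal{E}_{AB}$ must give probabilities in $[0,1]$ on \emph{every} state of $\Omega_{AB}$, which by local continuous reversibility includes all locally rotated states $(T\otimes T')\phi_x$ with $\hat{T},\hat{T}'\in\hat{\mathcal{T}}$. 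Imposing $p\ge0$ on this whole orbit converts into the family of constraints $1+\beta_y'\cdot\hat{T}'b_x+\langle\hat{T}^{\text{t}}\Gamma_y',C_x\hat{T}'^{\text{t}}\rangle\ge0$, whose minimisation over the group I would evaluate using $\min_{\hat{T}\in\text{O}(n)}\langle\hat{T}^{\text{t}}\Gamma',M\rangle=-\lVert M\Gamma'^{\text{t}}\rVert_{*}$ (the nuclear norm), with a determinant-constrained variant for $\text{SO}(n)$ at odd $n$. Here the parity enters decisively: when $-I\in\hat{\mathcal{T}}$ — which happens for $\text{O}(n)$ at any $n$ and for $\text{SO}(n)$ at even $n$, since $\det(-I)=(-1)^n$ — the orbit is symmetric under $C_x\mapsto-C_x$, forcing a symmetric bound and yielding the improvement to pairing $\le1$ (hence $I\le1$) in the $a_x=\mathbf{0}$ or $b_x=\mathbf{0}$ cases of iii) and iv); for odd $n$ with $\text{SO}(n)$ this symmetry is absent and only pairing $\le3$ (hence $I\le2$) survives, as in ii). I expect extracting the exact constants and handling the $\text{SO}(n)$ determinant constraint to be the delicate computational step.

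Finally I would settle the equality and strictness claims. For $n=3$ the bound is saturated by the explicit Bloch-ball protocol reproducing quantum superdense coding, which I would exhibit directly to show $I=2$ is attained, while for all other odd $n$ a strict inequality follows because the packing of mutually anti-correlated directions needed to saturate the pairing is unavailable. For part v), where $\phi_x=T_x\phi$ is generated from a single state by local transformations and $-I\in\hat{\mathcal{T}}$, I would argue by compactness that the extremal configurations achieving pairing $=3$ cannot be realised within such a singly-generated orbit, giving $I(X:Y)<2$. The concluding statement then follows at once: recalling from Proposition \ref{theorem1} that each local system has capacity $1$ bit, a bound $I\le2$ excludes hyperdense coding ($I>2\chi_{\text{C}}(\Omega_A)=2$) and superadditivity ($I>\chi_{\text{C}}(\Omega_A)+\chi_{\text{C}}(\Omega_B)=2$), while the strengthened bound $I\le1$ additionally excludes superdense coding ($I>\chi_{\text{C}}(\Omega_A)=1$).
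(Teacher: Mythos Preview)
Your overall framework --- reducing to Proposition \ref{theorem1A}, using the orbit $(T\otimes T')\phi_x$ under the consistency condition to constrain the pairing, and recognising that $-I\in\hat{\mathcal{T}}$ is the decisive fact for parts iii) and iv) --- matches the paper's approach closely. The reduction of i) to ii) via the sphere-transitivity classification is also exactly what the paper invokes.

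There are, however, two genuine gaps.

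For part ii), you propose a nuclear-norm minimisation $\min_{\hat{T}\in\text{O}(n)}\langle\hat{T}^{\text{t}}\Gamma',M\rangle=-\lVert M\Gamma'^{\text{t}}\rVert_{*}$ with a determinant-constrained variant for $\text{SO}(n)$, and flag this as ``the delicate computational step'' without carrying it out. The paper's route is different and much more direct: rather than optimising over the whole group, it evaluates the constraint at the specific matrices $Q_k=\text{diag}(-1,\ldots,-1,+1,-1,\ldots,-1)$, with the single $+1$ in position $k$; for odd $n$ one has $\det Q_k=(-1)^{n-1}=1$, so $Q_k\in\text{SO}(n)$. Setting $\hat{T}=I$, $\hat{T}'=Q_k$ in the positivity inequality (with $a=\mathbf{0}$) yields $\chi\le 1+2(\beta_kb_k+\gamma_k\cdot c_k)$ for each $k$; summing over $k=1,\ldots,n$ gives $n\chi\le n+2\chi$, i.e.\ $\chi\le n/(n-2)$, which is $\le 3$ with equality only at $n=3$. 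Your nuclear-norm approach does not obviously produce this constant, and the determinant constraint for odd $n$ makes the group minimisation genuinely awkward; the $Q_k$ trick is the missing idea.

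For part v), your compactness sketch (``extremal configurations achieving pairing $=3$ cannot be realised within such a singly-generated orbit'') is not the mechanism, and as stated does not furnish a proof. The paper's argument is algebraic: from $\hat{T}'=-I$ one gets $\chi\le 1+2\alpha\cdot a$, so $\chi=3$ forces $\lVert a\rVert=1$, i.e.\ the reduced state $\omega_a$ is \emph{pure}. A pure marginal forces $\phi$ to be a product state $\omega_a\otimes\omega_b$ (a standard GPT fact, cited as \cite{BW13}); then every $\phi_x=T_x\phi$ is product and Proposition \ref{theorem0} gives $I\le \chi_{\text{C}}(\Omega_A)=1<2$. If instead $\lVert a\rVert<1$, then $\chi<3$ directly and again $I<2$. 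No compactness is involved; the step you are missing is ``pure marginal $\Rightarrow$ product state'', combined with the already-proved impossibility of superdense coding from a separable initial state.
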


The following lemmas are used in the proof of Proposition \ref{theorem2}, given above. They provide important constraints on the bipartite states and effects in HSTs for arbitrary state spaces $\Omega_{AB}$ limited only by \hyperref[NS]{NS}, \hyperref[TL]{TL} and the restriction that $\Omega_A\simeq\Omega_B\simeq\Omega$ is a HST system of dimension $n$.

\begin{lemma}
\label{lemma2}
Let $\Omega_{AB}$ be a bipartite state space that satisfies \hyperref[NS]{no-signalling} and \hyperref[TL]{tomographic locality}, with $\Omega_A\simeq\Omega_B\simeq\Omega$ being the state space of a HST system of dimension $n$. Any state $\phi\in\Omega_{AB}$ can be expressed as a matrix
\begin{equation}
\label{eq:11}
\phi\equiv 
 \begin{pmatrix}
1 & b^{\text{t}} \\
 a  & C
 \end{pmatrix},
\end{equation}
where $a,b\in\mathbb{R}^n$, $C\in\mathbb{R}^n\otimes\mathbb{R}^n$, $\lVert a\rVert \leq 1$, $\lVert b \rVert \leq 1$ and $\lVert c_k \rVert \leq 1$, with $c_k$ being the $k$th column vector of $C$, for $k=1,2,\ldots,n$.
\end{lemma}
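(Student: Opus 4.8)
The plan is to establish the three bounds $\lVert a\rVert \leq 1$, $\lVert b\rVert \leq 1$, and $\lVert c_k\rVert \leq 1$ separately, exploiting the reduced-state conditions and the tomographic-locality structure already set up in the excerpt. The representation (\ref{eq:11}) of $\phi$ as a matrix is immediate from the general bipartite form (\ref{eq:011}), so the content is entirely in the norm constraints. First I would treat the bounds on $a$ and $b$. The reduced states are $\omega_A=\phi u_B=\bigl(\begin{smallmatrix}1\\ a\end{smallmatrix}\bigr)$ and $\omega_B=\phi^{\text{t}}u_A=\bigl(\begin{smallmatrix}1\\ b\end{smallmatrix}\bigr)$, and the consistency requirement imposed in the bipartite formalism forces $\omega_A\in\Omega_A$ and $\omega_B\in\Omega_B$. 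Since $\Omega_A\simeq\Omega_B\simeq\Omega$ is a hypersphere system, its set $\mathcal{R}$ is exactly the unit ball $\lVert r\rVert\leq 1$; hence $\lVert a\rVert\leq 1$ and $\lVert b\rVert\leq 1$ follow directly.

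The substantive step is the column bound $\lVert c_k\rVert \leq 1$. The idea is to extract the $k$th column of $C$ by applying the state $\phi$ to a suitable pair of local effect/state vectors and then use nonnegativity of outcome probabilities under the canonical hypersphere measurements. Concretely, I would feed $\phi$ on the $B$ side with the extremal effect $e_{v_k}=\frac{1}{2}\bigl(\begin{smallmatrix}1\\ v_k\end{smallmatrix}\bigr)$, where $v_k$ is the unit vector along the $k$th axis of $\mathbb{R}^n$, obtaining the (possibly subnormalised) conditional vector $\phi\, e_{v_k}$ on $A$ whose $\mathbb{R}^n$-part is built from $a$ and the column $c_k$. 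Pairing this with an arbitrary extremal effect $e_m=\frac{1}{2}\bigl(\begin{smallmatrix}1\\ m\end{smallmatrix}\bigr)$, $\lVert m\rVert=1$, on $A$ gives a product effect $e_m\otimes e_{v_k}$, and the no-signalling/tomographic-locality constraint $\Omega_{AB}\subseteq\Omega_A\otimes_{\text{max}}\Omega_B$ guarantees that $(e_m\otimes e_{v_k})\cdot\phi\geq 0$ for every such $m$. Expanding this inner product yields an inequality of the schematic form $\tfrac14\bigl(1+ a\cdot v_k + m\cdot b + m\cdot c_k\bigr)\geq 0$ for all unit $m$, and combining the two canonical measurement outcomes for $\pm v_k$ together with optimising over the direction of $m$ should isolate $\lVert c_k\rVert$. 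Taking $m$ in the direction $\mp c_k/\lVert c_k\rVert$ and adding the two resulting nonnegativity conditions (which cancels the linear-in-$a,b$ terms) will produce $1\geq \lVert c_k\rVert$.

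The main obstacle I anticipate is bookkeeping: one must check that the optimisation over $m$ and the choice of sign for $v_k$ genuinely cancels the $a$- and $b$-dependent cross terms so that a clean bound $\lVert c_k\rVert\leq 1$ survives, rather than a weaker bound contaminated by $\lVert a\rVert$ or $\lVert b\rVert$. The cleanest route is to use all four effects $e_{\pm m}\otimes e_{\pm v_k}$ and add the four nonnegativity inequalities with appropriate signs; by the structure of $\mathcal{D}_N$-type cancellations the constant, the $a\cdot v_k$, and the $m\cdot b$ contributions drop out, leaving only $\pm m\cdot c_k$, and maximising over unit $m$ then gives $\lVert c_k\rVert\leq 1$. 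I would verify that each of the four product effects indeed lies in $\mathcal{E}_A\otimes_{\text{min}}\mathcal{E}_B$, so that membership of $\phi$ in the maximal tensor product legitimately forces their nonnegativity; this is where the hypothesis that $\Omega_A,\Omega_B$ are genuine hypersphere systems (hence with the full set of extremal effects $e_m$, $\lVert m\rVert=1$) is essential.
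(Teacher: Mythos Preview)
Your proposal is correct and follows essentially the same approach as the paper: reduced states give $\lVert a\rVert,\lVert b\rVert\leq 1$, and nonnegativity of product effects aligned with $\pm\hat c_k$ on $A$ and $\pm v_k$ on $B$ isolates $\lVert c_k\rVert$. The paper's version is just the streamlined endpoint of your four-effect plan: it uses only the two effects $\tfrac12\bigl(\begin{smallmatrix}1\\-\hat c_k\end{smallmatrix}\bigr)\otimes\tfrac12\bigl(\begin{smallmatrix}1\\v_k\end{smallmatrix}\bigr)$ and $\tfrac12\bigl(\begin{smallmatrix}1\\\hat c_k\end{smallmatrix}\bigr)\otimes\tfrac12\bigl(\begin{smallmatrix}1\\-v_k\end{smallmatrix}\bigr)$, whose sum already cancels both cross terms and yields $1-\lVert c_k\rVert\geq 0$ directly---so your phrase ``with appropriate signs'' should be read as ``select the pair with opposite sign pattern and add,'' not as multiplying inequalities by $\pm1$.
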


\begin{proof}
The principles of \hyperref[NS]{no-signalling} and \hyperref[TL]{tomographic locality} imply that $\phi\in\Omega_{AB}$ can be expressed as the matrix (\ref{eq:11}), with $C\in\mathbb{R}^n\otimes\mathbb{R}^n$ and $a,b\in\mathbb{R}^n$, as given by Eq. (\ref{eq:011}) of the main text, and that $\Omega_A\otimes_{\text{min}}\Omega_B\subseteq\Omega_{AB}\subseteq\Omega_A\otimes_{\text{max}}\Omega_B$ \cite{BBLW07}. We show below that for HSTs, $\lVert a \rVert \leq 1$, $\lVert b \rVert \leq 1$ and $\lVert c_k \rVert \leq 1$, for $k=1,2,\ldots,n$.

Since the respective local states on $\Omega_A$ and $\Omega_B$ are $\omega_A\equiv \phi u=
\bigl(\begin{smallmatrix}1\\ a
\end{smallmatrix} \bigr)$ and $\omega_B\equiv \phi^\text{t} u= \bigl(\begin{smallmatrix}
1\\ b
\end{smallmatrix} \bigr)$, where $\text{t}$ denotes transposition, it follows from the definition of $\Omega$ that $\lVert a \rVert \leq 1$, $\lVert b \rVert \leq 1$.

We show that $\lVert c_k \rVert \leq 1$, for $k=1,2,\ldots,n$. Let $v_k\in\mathbb{R}^n$ be a column vector whose $k$th entry is equal to unity and all other entries are zero. It follows that $Cv_k=c_k$. Let $\lVert c_k\rVert>0$. Consider the following extremal product effects: $E_k^{(0)}\equiv \frac{1}{2}\bigl(\begin{smallmatrix}
1\\ -\hat{c}_k
\end{smallmatrix} \bigr)\otimes \frac{1}{2}\bigl(\begin{smallmatrix}
1\\ v_k
\end{smallmatrix} \bigr)$ and $E_k^{(1)}\equiv \frac{1}{2}\bigl(\begin{smallmatrix}
1\\ \hat{c}_k
\end{smallmatrix} \bigr)\otimes \frac{1}{2}\bigl(\begin{smallmatrix}
1\\ -v_k
\end{smallmatrix} \bigr)$, where $\hat{c}_k\equiv\frac{c_k}{\lVert c_k \rVert}$. Using expression (\ref{eq:11}) and the property $Cv_k=c_k$, we obtain
\begin{equation}
\label{eq:12}
E_k ^{(\lambda)} \cdot\phi=\frac{1}{4}\Bigl(1-\lVert c_k\rVert -(-1)^{\lambda}a\cdot\hat{c_k}+(-1)^\lambda b_k\Bigr),
\end{equation}
where $b_k$ is the $k$th entry of $b$ and $\lambda\in\lbrace 0,1\rbrace$. Since $\phi\in\Omega_A\otimes_{\text{max}}\Omega_B$ and $E_k^{(0)},E_k^{(1)}$ are product effects, we have $E_k^{(\lambda)}\cdot\phi\geq 0$, for $\lambda\in \lbrace 0,1\rbrace$. It follows from (\ref{eq:12}) that $2\bigl(E_k^{(0)}+E_k^{(1)}\bigr)\cdot \phi=1 -\lVert c_k\rVert\geq 0$, which implies $\lVert c_k\rVert \leq 1$.
\end{proof}

\begin{lemma}
\label{lemma3}
Let $\mathcal{E}_{AB}$ be the space for normalised effects corresponding to a bipartite state space $\Omega_{AB}$ that satisfies \hyperref[NS]{no-signalling} and \hyperref[TL]{tomographic locality}, with $\Omega_A\simeq\Omega_B\simeq\Omega$ being a HST system of dimension $n$. Any effect $E\in\mathcal{E}_{AB}$ can be expressed as a matrix
\begin{equation}
\label{eq:13}
 E=\gamma\begin{pmatrix}
1 & \beta^{\text{t}} \\
 \alpha  & \Gamma
 \end{pmatrix},
\end{equation}
where $0\leq\gamma\leq 1$, $\alpha,\beta\in\mathbb{R}^n$, $\Gamma\in\mathbb{R}^n\otimes\mathbb{R}^n$, $\lVert\alpha\rVert\leq 1$, $\lVert\beta\rVert\leq 1$, $\lVert \gamma_k\rVert \leq 1$, with $\gamma_k$ being the $k$th column vector of $\Gamma$, for $k=1,2,\ldots,n$.
\end{lemma}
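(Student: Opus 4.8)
The plan is to mirror, by state--effect duality, the argument used for Lemma \ref{lemma2}. There the columns of the correlation block $C$ of a state were bounded by pairing $\phi$ with suitable product \emph{effects} and invoking $\phi\in\Omega_A\otimes_{\text{max}}\Omega_B$; here I would bound the blocks of an effect $E$ by pairing it with suitable product \emph{states} and invoking the fact, valid for any $\Omega_{AB}$, that $\Omega_A\otimes_{\text{min}}\Omega_B\subseteq\Omega_{AB}$, so that $E\cdot\phi\geq0$ must hold on every product state. First, \hyperref[NS]{no-signalling} and \hyperref[TL]{tomographic locality} give that $E$ has the matrix form of Eq.~(\ref{eq:012}) with top-left entry $\gamma$; writing $E=\gamma\bigl(\begin{smallmatrix}1&\beta^{\text{t}}\\\alpha&\Gamma\end{smallmatrix}\bigr)$ just factors out this entry. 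The case $\gamma=0$ is degenerate: testing $E$ successively against $\omega_0\otimes\omega_0$, $\omega_a\otimes\omega_0$, $\omega_0\otimes\omega_b$ and pure product states forces $\alpha=\beta=\bold{0}$ and $\Gamma=0$, i.e.\ $E=0$, for which the representation holds trivially, so I would assume $\gamma>0$ henceforth.

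For the scalar and border blocks I would evaluate $E$ on product states, obtaining $E\cdot(\omega_a\otimes\omega_b)=\gamma(1+\alpha\cdot a+\beta\cdot b+a^{\text{t}}\Gamma b)$. Setting $a=b=\bold{0}$ gives $E\cdot(\omega_0\otimes\omega_0)=\gamma$, and the requirement $0\leq E\cdot\phi\leq1$ for all $\phi\in\Omega_{AB}$ yields $0\leq\gamma\leq1$. Next, setting $b=\bold{0}$ leaves $\gamma(1+\alpha\cdot a)$; nonnegativity at the pure state $a=-\alpha/\lVert\alpha\rVert$ (the case $\alpha=\bold{0}$ being trivial) gives $\gamma(1-\lVert\alpha\rVert)\geq0$, hence $\lVert\alpha\rVert\leq1$, and symmetrically $\lVert\beta\rVert\leq1$.

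The remaining and main step is the bound $\lVert\gamma_k\rVert\leq1$ on the columns of $\Gamma$, which is the precise dual of the $\lVert c_k\rVert\leq1$ estimate in Lemma \ref{lemma2}. I would take the two pure product states $\phi_k^{(0)}\equiv\bigl(\begin{smallmatrix}1\\-\hat\gamma_k\end{smallmatrix}\bigr)\otimes\bigl(\begin{smallmatrix}1\\v_k\end{smallmatrix}\bigr)$ and $\phi_k^{(1)}\equiv\bigl(\begin{smallmatrix}1\\\hat\gamma_k\end{smallmatrix}\bigr)\otimes\bigl(\begin{smallmatrix}1\\-v_k\end{smallmatrix}\bigr)$, where $\hat\gamma_k\equiv\gamma_k/\lVert\gamma_k\rVert$ and $v_k$ is the $k$th standard basis vector, chosen so that the $\alpha$ and $\beta$ cross terms cancel upon summing. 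A direct evaluation gives $E\cdot\phi_k^{(\lambda)}=\gamma\bigl(1-\lVert\gamma_k\rVert+(-1)^{\lambda}(\beta_k-\alpha\cdot\hat\gamma_k)\bigr)$, so that $E\cdot(\phi_k^{(0)}+\phi_k^{(1)})=2\gamma(1-\lVert\gamma_k\rVert)$. Since both $\phi_k^{(\lambda)}$ are product states, hence in $\Omega_{AB}$, we have $E\cdot\phi_k^{(\lambda)}\geq0$, whence $2\gamma(1-\lVert\gamma_k\rVert)\geq0$ and, using $\gamma>0$, the desired $\lVert\gamma_k\rVert\leq1$. The only delicate points I anticipate are the sign bookkeeping that makes the $\alpha,\beta$ contributions cancel in the sum (exactly as the $a\cdot\hat c_k$ and $b_k$ terms cancelled in Lemma \ref{lemma2}) and the trivial separate handling of $\lVert\gamma_k\rVert=0$.
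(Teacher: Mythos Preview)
Your proposal is correct and follows essentially the same approach as the paper: both arguments bound the blocks of $E$ by evaluating $E$ on suitably chosen pure product states in $\Omega_A\otimes_{\text{min}}\Omega_B\subseteq\Omega_{AB}$, exploiting duality with Lemma~\ref{lemma2}. The only cosmetic differences are that the paper works with the un-normalised blocks of Eq.~(\ref{eq:14}) and divides by $\gamma$ at the end (also recording the stronger, but ultimately unused, bounds $\lVert\alpha\rVert,\lVert\beta\rVert,\lVert\gamma_k\rVert\leq\min\{\gamma,1-\gamma\}$ via the upper constraint $E\cdot\phi\leq1$), and uses four product states $\phi_k^{(\mu,\nu)}$ for the column estimate where you use two; your streamlined version with only the lower bound $E\cdot\phi\geq0$ is sufficient for the stated lemma.
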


\begin{proof}
The principles of \hyperref[NS]{no-signalling} and \hyperref[TL]{tomographic locality} imply that any state $\phi\in\Omega_{AB}$ can be expressed as the matrix (\ref{eq:11}). Thus, any effect $E\in\mathcal{E}_{AB}$ can be expressed as the matrix given by Eq. (\ref{eq:012}) of the main text:
\begin{equation}
\label{eq:14}
E=
 \begin{pmatrix}
\gamma & \beta^{\text{t}} \\
 \alpha & \Gamma
 \end{pmatrix},
\end{equation}
where $\gamma\in\mathbb{R}$, $\alpha,\beta\in\mathbb{R}^n$, $\Gamma\in\mathbb{R}^n\otimes\mathbb{R}^n$. We show below the following inequalities:
\begin{eqnarray}
\label{eq:14.1}
0&\leq& \gamma \leq 1,\\
\label{eq:14.2}
\lVert \alpha \rVert &\leq& \min \lbrace \gamma,1-\gamma\rbrace,\\
\label{eq:14.3}
\lVert \beta \rVert &\leq& \min\lbrace \gamma,1-\gamma\rbrace,\\
\label{eq:14.4}
\lVert \gamma_k \rVert&\leq& \min\lbrace\gamma,1-\gamma\rbrace,
\end{eqnarray}
for $k=1,2,\ldots,n$. From (\ref{eq:14}) -- (\ref{eq:14.4}), we see that the case $\gamma=0$ corresponds to the zero effect, which has an expression in agreement with (\ref{eq:13}). For the case $\gamma>0$, it follows from  (\ref{eq:14}) -- (\ref{eq:14.4}) that $E$ can be expressed as the matrix
\begin{equation}
\label{eq:15}
E=\gamma
 \begin{pmatrix}
1 & \beta'^{\text{t}} \\
 \alpha'  & \Gamma'
 \end{pmatrix},
\end{equation}
where $\alpha'\equiv\gamma^{-1}\alpha, \beta'\equiv\gamma^{-1}\beta$ and $\Gamma'\in\mathbb{R}^n\otimes\mathbb{R}^n$ has column vectors $\gamma_k'\equiv\gamma^{-1}\gamma_k$, with $\lVert \alpha'\rVert\leq 1, \lVert \beta'\rVert\leq 1, \lVert\gamma_k'\rVert \leq 1$, for $k=1,2,\ldots,n$. Thus, by dropping the primes in (\ref{eq:15}), we obtain the claimed result (\ref{eq:13}).

We complete the proof by showing (\ref{eq:14.1}) -- (\ref{eq:14.4}). To do so, we use the expression of $E$ given by (\ref{eq:14}). We apply $E$ on particular product states $\phi$ and use the conditions $0\leq E\cdot\phi\leq 1$. This follows because $\Omega_A\otimes_{\text{min}}\Omega_B\subseteq\Omega_{AB}$.

We show (\ref{eq:14.1}). We apply the effect $E$ on the state $\phi_0\equiv \bigl(\begin{smallmatrix}
1\\ \bold{0}
\end{smallmatrix} \bigr)\otimes\bigl(\begin{smallmatrix}
1\\ \bold{0}
\end{smallmatrix} \bigr)$. We obtain $E\cdot\phi_0=\gamma$. The condition $0\leq E\cdot\phi_0\leq 1$ implies (\ref{eq:14.1}).

The inequalities (\ref{eq:14.2}) -- (\ref{eq:14.4}) are trivially satisfied if $\lVert \alpha\rVert = \lVert \beta\rVert = \lVert \gamma_k\rVert = 0$. Thus, below we consider the case $\lVert \alpha\rVert >0$, $\lVert \beta\rVert>0$, $\lVert \gamma_k\rVert>0$, for $k=1,2,\ldots,n$.

We show (\ref{eq:14.2}). Consider the states $\phi_{\pm}\equiv\bigl(\begin{smallmatrix}
1\\ \pm\hat{\alpha}
\end{smallmatrix} \bigr)\otimes\bigl(\begin{smallmatrix}
1\\ \bold{0}
\end{smallmatrix} \bigr)$, where $\hat{\alpha}\equiv\lVert \alpha \rVert ^{-1} \alpha$. We have that $E\cdot\phi_{\pm}=\gamma \pm \lVert \alpha \rVert$. The conditions $0\leq E\cdot\phi_{\pm}\leq 1$ imply (\ref{eq:14.2}).

We show (\ref{eq:14.3}). Consider the states $\phi'_{\pm}\equiv\bigl(\begin{smallmatrix}
1\\ \bold{0}
\end{smallmatrix} \bigr)\otimes \bigl(\begin{smallmatrix}
1\\ \pm\hat{\beta}
\end{smallmatrix} \bigr)$, where $\hat{\beta}\equiv\lVert \beta \rVert ^{-1} \beta$. We have that $E\cdot\phi'_{\pm}=\gamma \pm \lVert \beta \rVert$. The conditions $0\leq E\cdot\phi'_{\pm}\leq 1$ imply (\ref{eq:14.3}).

We show (\ref{eq:14.4}). Let $v_k\in\mathbb{R}^n$ be a column vector with the $k$th entry equal to unity and the other entries equal to zero. We have that $\Gamma v_k =\gamma_k$. Let $\omega_k^{(\mu)}\equiv \bigl(\begin{smallmatrix}
1\\ (-1)^\mu\hat{\gamma}_k
\end{smallmatrix} \bigr)$ and $\tilde{\omega}_k^{(\nu)}\equiv \bigl(\begin{smallmatrix}
1\\ (-1)^\nu v_k
\end{smallmatrix} \bigr)$, where $\hat{\gamma}_k\equiv\lVert \gamma_k\rVert^{-1}\gamma_k$ and $\mu,\nu\in\lbrace 0,1 \rbrace$. Consider the states $\phi_k^{(\mu,\nu)}\equiv \omega_k^{(\mu)}\otimes\tilde{\omega}_k^{(\nu)}$, for $\mu,\nu\in\lbrace 0,1 \rbrace$ and $k=1,2,\ldots,n$. We obtain
\begin{equation}
\label{eq:16}
E\cdot\phi_k^{(\mu,\nu)} =(-1)^{\mu+\nu} \lVert \gamma_k\rVert +(-1)^\mu \alpha\cdot \frac{\gamma_k}{\lVert \gamma_k \rVert} +(-1)^\nu\beta_k +\gamma,
\end{equation}
where $\beta_k$ is the $k$th component of $\beta$. From the conditions $0\leq E\cdot\phi_k^{(\mu,\nu)}\leq 1$, we obtain $1\geq \frac{1}{2}E\cdot\bigl(\phi_k^{(0,0)}+\phi_k^{(1,1)}\bigr)=\lVert \gamma_k \rVert +\gamma$ and $0\leq \frac{1}{2}E\cdot\bigl(\phi_k^{(0,1)}+\phi_k^{(1,0)}\bigr)=-\lVert \gamma_k \rVert +\gamma$, and hence (\ref{eq:14.4}), which completes the proof.
\end{proof}

\begin{proof}[Proof of Proposition \ref{theorem2}]
Since $\Omega_{AB}$ satisfies \hyperref[NS]{the no-signalling principle} and \hyperref[TL]{tomographic locality},  Lemmas \ref{lemma2} and \ref{lemma3} hold. Thus, the states $\phi_x\in\Omega_{AB}$ and the effects $E_y\in\mathcal{E}_{AB}$ can be expressed as in (\ref{eq:11}) and (\ref{eq:13}):
\begin{equation}
\label{eq:a1}
\phi_x= 
 \begin{pmatrix}
1 & b^{\text{t}}_x \\
 a_x  & C_x
 \end{pmatrix},
\qquad E_y=\gamma_y\begin{pmatrix}
1 & \beta^{\text{t}}_y \\
 \alpha_y  & \Gamma_y
 \end{pmatrix}.
\end{equation}
It follows from (\ref{eq:a1}) and Proposition \ref{theorem1A} that
\begin{equation}
\label{eq:a2}
I(X:Y)\leq \log_2 (1+\chi),
\end{equation}
where $\chi\equiv \max \{\chi_{x,y}\}$, $\chi_{x,y}\equiv\alpha_y\cdot a_x+\beta_y\cdot b_x+\Gamma_y\cdot C_x$ and the maximum is taken over all states $\phi_x\in\Omega_{AB}$ and measurements with effects $E_y\in\mathcal{E}_{AB}$. Let this maximum be achieved by a state $\phi$ and an effect $E$, for which we drop the $x$ and $y$ labels. From (\ref{eq:a1}), we have 
\begin{equation}
\label{eq:a3}
E\cdot\phi = \gamma (1+\chi),
\end{equation}
where $\chi\equiv\alpha\cdot a+\beta\cdot b+\Gamma\cdot C$.

The set $\mathcal{T}$ of allowed local transformations on HST systems $\Omega_A\simeq\Omega_B\simeq\Omega$ of dimension $n\geq 2$ that satisfy local continuous reversibility has elements of the form given by Eq. (\ref{eq:a0}), where the group $\hat{\mathcal{T}}$ must be transitive on the unit sphere in $\mathbb{R}^n$ \cite{MMAP13}. There are various groups with this property. In general, $\hat{\mathcal{T}}$ is a subgroup of $\text{SO}(n)$. For odd $n\neq 7$, the only possibility is $\hat{\mathcal{T}}=\text{SO}(n)$ \cite{MMPA14}.

 From the \hyperref[consistency]{consistency} condition, we have $T\phi_x T'^{\text{t}}\in\Omega_{AB}$ for all $\phi_x\in\Omega_{AB}$ and all $T,T'\in\mathcal{T}$, hence, $E_y \cdot( T\phi_x T'^{\text{t}} )\geq 0$ for all $E_y\in \mathcal{E}_{AB}$. Thus, $E\cdot (T\phi T'^{\text{t}} )\geq 0$ for all $T,T'\in\mathcal{T}$. It follows from (\ref{eq:a3}) that
\begin{equation}
\label{eq:a4}
1+\alpha\cdot a'+\beta\cdot b'+\Gamma \cdot C' \geq 0,
\end{equation}
where $a'\equiv \hat{T}a$, $b'\equiv \hat{T'}b$ and $C'\equiv \hat{T} C\hat{T'}^{\text{t}}$, for all $\hat{T},\hat{T'}\in\hat{\mathcal{T}}$.

We show i). As said above, for odd $n\neq 7$, local continuous reversibility implies that $\hat{\mathcal{T}}=\text{SO}(n)$. Consider Eq. (\ref{eq:a4}) for the case $a=\bold{0}$. Let $\hat{T}=I$ and $\hat{T'}=Q_k$ being a diagonal matrix with all entries equal to $-1$ except for the $k$th entry, which equals unity. We have that $\hat{T},\hat{T'}\in\text{SO}(n)$. It follows that
\begin{equation}
\label{eq:a5}
1-\beta\cdot b- \Gamma\cdot C +2(\beta_k b_k+\gamma_k\cdot c_k)\geq 0,
\end{equation}
where $b_k$ and $\beta_k$ are the $k$th entries of $b$ and $\beta$, and $c_k$ and $\gamma_k$ are the $k$th column vectors of $C$ and $\Gamma$, respectively. Thus, in the case $a=\bold{0}$, we have from (\ref{eq:a5}) that
\begin{equation}
\label{eq:a6}
\chi =\beta\cdot b+\Gamma\cdot C \leq 1 + 2(\beta_k b_k+\gamma_k\cdot c_k),
\end{equation}
for $k=1,2\ldots,n$. Since $\beta\cdot b=\sum_{k=1}^n\beta_k b_k$ and $\Gamma\cdot C = \sum_{k=1}^n \gamma_k\cdot c_k$, it follows from (\ref{eq:a6}) that 
\begin{equation}
\label{eq:a7}
\chi \leq \frac{1}{1-\frac{2}{n}}\leq 3,
\end{equation}
where the second inequality is achieved only for $n=3$. Thus, from (\ref{eq:a2}) and (\ref{eq:a7}), we have $I(X:Y)\leq 2$, with equality achieved only for $n=3$. The case $b=\bold{0}$ is proved similarly by considering $\hat{T'}=I$ and $\hat{T}=Q_k$.

We show ii). Since we assume that $\hat{\mathcal{T}}=\text{SO}(n)$ for arbitrary odd $n$, the proof follows straightforwardly as for i).

We show iii). We assume $n$ even and $\hat{\mathcal{T}}=\text{SO}(n)$. Consider equation (\ref{eq:a4}) for the case $\hat{T}=I$ and $\hat{T'}=-I$. We have $\hat{T},\hat{T'}\in\text{SO}(n)$. It follows that
\begin{equation}
\label{eq:a8}
1+\alpha\cdot a-\beta\cdot b-\Gamma\cdot C \geq 0.
\end{equation}
Thus, we have
\begin{equation}
\label{eq:a9}
\chi \leq 1 + 2\alpha\cdot a.
\end{equation}
If instead we have $\hat{T}=-I$ and $\hat{T'}=I$, we obtain the similar expression
\begin{equation}
\label{eq:a10}
\chi \leq 1 + 2\beta\cdot b.
\end{equation}
From Lemmas \ref{lemma2} and \ref{lemma3}, we have $\lVert a \rVert \leq 1$, $\lVert b \rVert \leq 1$, $\lVert \alpha \rVert \leq 1$ and $\lVert \beta \rVert \leq 1$. It follows from (\ref{eq:a2}), (\ref{eq:a9}) and (\ref{eq:a10}) that $I(X:Y)\leq 2$ and $I(X:Y)\leq 1$ if $a=\bold{0}$ or $b=\bold{0}$.

We show iv). We assume $\hat{\mathcal{T}}=\text{O}(n)$ and arbitrary $n$.  Consider equation (\ref{eq:a4}) for the case $\hat{T}=I$ and $\hat{T'}=-I$. We have $\hat{T},\hat{T'}\in\text{O}(n)$. The proof follows similarly as for iii).

We show v). We assume $\hat{\mathcal{T}}=\text{SO}(n)$ and even $n$, or $\hat{\mathcal{T}}=\text{O}(n)$ and arbitrary $n$. In any case we have $-I\in\hat{\mathcal{T}}$. Considering equation (\ref{eq:a4}) with $\hat{T}=I$ and $\hat{T'}=-I$, we obtain, as in the proof of iii), the expressions (\ref{eq:a9}) and (\ref{eq:a10}), from which it follows that $I(X:Y)\leq 2$. However, the equality $I(X:Y)=2$ cannot be achieved by states that are obtained by local transformations, $\phi_x=T_x\phi_0$, as we show. On the one hand, from (\ref{eq:a2}) and (\ref{eq:a9}), a necessary condition for the equality $I(X:Y)=2$ is that $\lVert a \rVert =1$, that is, that the local state $\omega_a\in\Omega_A$ is pure. On the other hand, if $\omega_a$ is pure, it is easy to see that $\phi$ must be product: $\phi=\omega_a\otimes \omega_b$, for some $\omega_b\in\Omega_B$ \cite{BW13}. Thus, if $\lVert a \rVert =1$, the bipartite states are product: $\phi_x=\omega_{a_x}\otimes \omega_b$. It follows from Proposition \ref{theorem0} that in this case $I(X:Y)\leq 1 <2$.
\end{proof}

\section{Locally continuous hyperdense coding violating tomographic locality}
\label{app:lcrtlhdc}

We show that relaxing \hyperref[TL]{tomographic locality} allows hyperdense coding while still satisfying \hyperref[LCT]{local continuous reversibility}. 

Consider local state spaces $\Omega_A\simeq\Omega_B\simeq\Omega_m^{(n)}$, where $\Omega_m^{(n)}\equiv\Bigl\{\omega_r\equiv\Bigl(\begin{smallmatrix}
1\\ \bold{0}\\ r
\end{smallmatrix} \Bigr)\Big\vert \bold{0}\in\mathbb{R}^{n}, r\in\mathbb{R}^m, \lVert r \rVert\leq 1\Bigr\}$, for $m\in\mathbb{N}$ and $n\in\mathbb{Z}_+$. The pure states satisfy $\lVert r \rVert =1$. The unit effect is $u=\bigl(\begin{smallmatrix}
1\\ \bold{0}
\end{smallmatrix} \bigr)$ with $\bold{0}$ being the null vector in $\mathbb{R}^{n+m}$. The state space $\Omega_m^{(n)}$ corresponds to an $m-$hypersphere system embedded in a bigger vector space. The space of local effects $\mathcal{E}$ is the convex hull of the zero effect, the unit effect $u$ and the extremal effects $e_{r}\equiv \frac{1}{2}\Bigl(\begin{smallmatrix}
1\\ \bold{0}\\ r
\end{smallmatrix} \Bigr)$, where $\bold{0}$ is the null vector in $\mathbb{R}^{n}$ and $r\in\mathbb{R}^m$ with $\lVert r \rVert =1$. In what follows we consider $n=2^{N}-1$, with $N\in\mathbb{N}/\{1\}$. The case $N=1$ corresponds to a classical bit.

We define the set of allowed local transformations as $\mathcal{T}_A\simeq\mathcal{T}_B\simeq\mathcal{T}$. The elements of $\mathcal{T}$ are defined by
\begin{equation}
\label{eq:b1}
T_\mu^{(R)}= 
 \begin{pmatrix}  T_\mu &  0\\
  0  & R
 \end{pmatrix},
\end{equation}
where $R\in\text{SO}(m)$ and $T_\mu\in\mathcal{T}_N$, for $\mu\in\{0,1\}^N$, as defined by Eq. (\ref{eq:43.51}) of the main text. It is straightforward to see that the local state space $\Omega_m^{(n)}$ remains invariant under the set of allowed local transformations $\mathcal{T}$. 

Consider the following entangled states
\begin{equation}
\label{eq:b2}
\Phi_\mu\equiv 
 \begin{pmatrix} \phi_\mu &  0\\
  0  & 0
 \end{pmatrix},
\end{equation}
where $\phi_\mu\in\mathcal{S}_N$ is defined by Eq. (\ref{eq:43.310}) of the main text, for $\mu\in\{0,1\}^N$. That is, $\Phi_\mu$ is a state given by a diagonal matrix whose first $n+1=2^N$ entries correspond to the diagonal matrix $\phi_\mu$ and the last $m$ entries are zero. 

We define $\Omega_{AB}\equiv\text{convex hull}\{\Omega_A\otimes_{\text{min}}\Omega_B, \Lambda\}$, 
where $\Lambda\equiv \{T\Phi_{\bold{0}}T'^{\text{t}}\vert T,T'\in\mathcal{T}\}$ and where $\Phi_{\bold{0}}$ is the state $\Phi_{\mu}$ with $\mu=(0,0,\ldots,0)$ being the string of $N$ zero entries. 
From Eqs. (\ref{eq:43.51}) and (\ref{eq:43.52}) of the main text, and Eqs. (\ref{eq:b1}) and (\ref{eq:b2}), it is straightforward to see that $\Lambda=\{\Phi_\mu\}_{\mu\in\{0,1\}^N}$.

We show that $\Omega_{AB}\subseteq\Omega_A\otimes_{\text{max}}\Omega_B$. From the definition of $\Omega_{AB}$, we only need to show that $\Lambda\subset\Omega_A\otimes_{\text{max}}\Omega_B$. Consider arbitrary effects $e=\Bigl(\begin{smallmatrix}
\chi\\ \bold{0}\\ \alpha
\end{smallmatrix} \Bigr)\in\mathcal{E}_A$ and $f=\Bigl(\begin{smallmatrix}
\xi\\ \bold{0}\\ \beta
\end{smallmatrix} \Bigr)\in\mathcal{E}_B$. We have that $\chi,\xi\in[0,1]$ and $\alpha,\beta\in\mathbb{R}^m$. Let $\Phi_\mu\in\Lambda$. We have that 
\begin{equation}
\label{eq:b3}
(e\otimes f)\cdot \Phi_\mu=\chi\xi.
\end{equation}
Since $\chi,\xi\in[0,1]$, we have $(e\otimes f)\cdot \Phi_\mu\in[0,1]$ for all $\mu\in\{0,1\}^N$. We also have that $(u_A\otimes u_B) \cdot \Phi_\mu =1$. Thus, $\Lambda\subset\Omega_A\otimes_{\text{max}}\Omega_B$ and $\Omega_{AB}\subseteq\Omega_A\otimes_{\text{max}}\Omega_B$.

We show that the set of allowed local transformations $\mathcal{T}$ satisfies the \hyperref[consistency]{consistency} condition. As noticed above, the local state space $\Omega_m^{(n)}$ remains invariant under $\mathcal{T}$. By definition, $\Lambda$ remains invariant under local transformations from $\mathcal{T}$ too. Thus, from the definition of $\Omega_{AB}$, we have $T\omega\in\Omega_A$ and $T\Phi\in\Omega_{AB}$ for all $\omega\in\Omega_A$, $\Phi\in\Omega_{AB}$ and $T\in\mathcal{T}$, and similarly for system $B$, as required.

It is easy to see that local continuous reversibility is satisfied. Consider any pair of pure states $\omega_{r_0},\omega_{r_1}\in\Omega_m^{(2^N-1)}$. There exist continuous reversible transformations $T_\mu^{(R)},T_\mu^{(R^{\text{t}})}\in\mathcal{T}$ such that $Rr_0=r_1$ (and $R^{\text{t}}r_1=r_0$) for any $\mu\in\{0,1\}^N$, hence, $T_\mu^{(R)}\omega_{r_0}=\omega_{r_1}$ and $T_\mu^{(R^{\text{t}})}\omega_{r_1}=\omega_{r_0}$. 

Similarly, any pair of pure entangled states $\Phi_\mu,\Phi_{\mu\oplus\mu'}\in\Lambda$ are connected by reversible transformations: $\Phi_{\mu\oplus\mu'}=T_{\mu'}^{(R)}\Phi_\mu$, $\Phi_\mu=T_{-\mu'}^{(R)}\Phi_{\mu\oplus\mu'}$, as follows from Eqs. (\ref{eq:b1}), (\ref{eq:b2}) and Eq. (\ref{eq:43.52}) of the main text. However, these transformations do not act continuously on the states $\Phi_\mu,\Phi_{\mu\oplus\mu'}$, which is easily seen from the fact that the set $\mathcal{T}_N$ in (\ref{eq:b1}) is discrete.

On the other hand, tomographic locality is violated because there exist states in $\Omega_{AB}$, the entangled states in $\Lambda$, that cannot be determined from the outcome probabilities of local measurements performed on $\Omega_A$ and $\Omega_B$. This is easily seen from (\ref{eq:b3}) because the outcome probabilities of local measurements on states $\Phi_\mu\in\Lambda$ are independent of the state $\Phi_\mu$. Thus, the states in $\Lambda$ cannot be determined from local measurements.

We introduce dense coding protocols in the state space $\Omega_{AB}$ defined above that are \hyperref[HDC]{hyperdense}. Alice and Bob initially share the state $\Phi_{\bold{0}}$ given by (\ref{eq:b2}). With probability $p_x=2^{-N}$, Alice implements the local transformation $T_x^{(R)}$ defined by (\ref{eq:b1}) for some $R\in\text{SO}(m)$, and the state transforms into $T_x^{(R)}\Phi_{\bold{0}}=\Phi_x$, for $x\in\{0,1\}^N$. Alice sends Bob her system. Bob applies the joint measurement defined by the effects
\begin{equation}
\label{eq:b4}
F_y\equiv2^{-N}\Phi_y=
 \begin{pmatrix}
E_y &  \bold{0}\\
 \bold{0}  & \bold{0}
 \end{pmatrix},
\end{equation}
for $y\in\{0,1\}^N$, where in the second equality we used the definition (\ref{eq:b2}), and expression (\ref{eq:43.FN}) of the main text. 

We show that $\{F_y\}_{y\in\{0,1\}^N}$ defines a measurement. First, we show that $\{F_y\}_{y\in\{0,1\}^N}\subset\mathcal{E}_{AB}$. Since $\Omega_{AB}\equiv \text{convex hull} \{\Omega_A\otimes_{\text{min}}\Omega_B, \Lambda\}$ with $\Lambda=\{\Phi_x\}_{x\in\{0,1\}^N}$, we only need to show that $0\leq F_y\cdot (\omega_a\otimes\omega_b)\leq 1$ and  $0\leq F_y\cdot \Phi_x\leq 1$, for all $x,y\in\{0,1\}^N$, $\omega_a\in\Omega_A$ and $\omega_b\in\Omega_B$. It is easy to see that $F_y \cdot (\omega_a\otimes\omega_b)=2^{-N}$ for all $y\in\{0,1\}^N$, $\omega_a\in\Omega_A$ and $\omega_b\in\Omega_B$. Moreover, from (\ref{eq:b2}), (\ref{eq:b4}) and Eq. (\ref{eq:43.33}) of the main text, we have
\begin{equation}
\label{eq:b5}
F_y\cdot\Phi_x= E_y\cdot\phi_x=\delta_{y,x},
\end{equation}
for all $x,y\in\{0,1\}^N$. Thus, $\{F_y\}_{y\in\{0,1\}^N}\subset\mathcal{E}_{AB}$. Second, from (\ref{eq:b4}) and Eq. (\ref{eq:lasteqever}) of the main text, we have $\sum_{y\in\{0,1\}^N}F_y=\text{diag}(1,\bold{0})$, with $\bold{0}$ being the null vector in $\mathbb{R}^{n+m}$ and $n=2^N-1$, which is the unit effect $u_{AB}$ on $\Omega_{AB}$.

This protocol achieves a mutual information between Alice's and Bob's random variables $X$ and $Y$ of
\begin{equation}
\label{eq:b6}
I(X:Y) = N,
\end{equation}
as follows from (\ref{eq:b5}). Since the classical capacity of the systems in $\Omega_A$ or $\Omega_B$ is one bit for any value of $m$, this dense coding protocol is \hyperref[HDC]{hyperdense} for $N>2$.

\section{Hypersphere theories violating local continuous reversibility that do not have hyperdense coding or superdense coding}
\label{lastapp}
We give an example of a class of nontrivial theories with local systems being HST systems that satisfy tomographic locality, violate local reversibility and \hyperref[LCT]{local continuous reversibility}. Thus, in general, the given theories violate continuous reversibility. We show that some of these theories do not have \hyperref[HDC]{hyperdense coding}, while others do not even have \hyperref[SDC]{superdense coding}.

Consider a theory of two HST systems of dimension $n=2^{N}-1$ and $N\geq 2$, with a joint state space $\Omega_{AB}=\text{convex hull}\{\Omega_A\otimes_{\text{min}}\Omega_B,\mathcal{S}_N^{(\lambda)}\}$. The set of entangled states is defined by $\mathcal{S}_N^{(\lambda)}\equiv\{\phi_\mu^{(\lambda)}\}_{\mu\in\{0,1\}^N}$, where the states $\phi_\mu^{(\lambda)}$ are given by Eq. (\ref{eq:47}). That is, we have 
 \begin{equation}
\label{eq:G1}
\phi_\mu^{(\lambda)}=
 \begin{pmatrix}
1 & \bold{0} \\
 \bold{0}  & \lambda\hat{T}_\mu
 \end{pmatrix},
\end{equation}
where $\hat{T}_\mu$ is defined as in Eq. (\ref{eq:43.53}) of the main text, for $\mu\in\{0,1\}^N$ and some $\lambda\in[-1,1]$ that we specify below.

We note that these theories satisfy tomographic locality and violate local reversibility and local continuous reversibility, as the theories defined in Section \ref{secht}. Since we have a discrete set of entangled states, and the sets of local pure states are hyperspheres, the consistency condition implies that not all transformations that connect the whole set of local pure states are allowed local transformations. Thus, the set of local pure states cannot be connected, which violates local reversibility and local continuous reversibility.

Now we show that there are numbers $\bar{\lambda}_0,\bar{\lambda}_1$ satisfying $0<\bar{\lambda}_0<\bar{\lambda}_1<1$ such that there is not hyperdense coding if $\lvert \lambda\rvert\leq\bar{\lambda}_1$, and there is not superdense coding if $\lvert \lambda\rvert\leq\bar{\lambda}_0$. First, we use Lemma \ref{lemma3} stated in \hyperref[app:theorem]{Appendix \ref{app:theorem}} to write an arbitrary effect $E$ in the form
\begin{equation}
\label{eq:G2}
E\equiv 
 \gamma\begin{pmatrix}
1 & \beta^{\text{t}} \\
 \alpha  & \Gamma
 \end{pmatrix},
\end{equation}
where $0\leq\gamma\leq 1$, $\alpha,\beta\in\mathbb{R}^n$, $\Gamma\in\mathbb{R}^n\otimes\mathbb{R}^n$, $\lVert\alpha\rVert\leq 1$, $\lVert\beta\rVert\leq 1$, $\lVert \gamma_k\rVert \leq 1$, with $\gamma_k$ being the $k$th column vector of $\Gamma$, for $k=1,2,\ldots,n$.

Second, we note from the convexity of the mutual information \cite{CTbook} and the definition of $\Omega_{AB}$ that a dense coding protocol that achieves the \hyperref[TDCC]{dense coding capacity} of the theory must use a pure entangled state $\phi_{\mu}^{(\lambda)}\in\mathcal{S}_N^{(\lambda)}$ as an initial state.  This is because if the initial state is a convex combination of several states, being these entangled or product, the \hyperref[DCC]{dense coding capacity of the state} can only decrease. 

Third, using (\ref{eq:G1}), (\ref{eq:G2}) and Proposition \ref{theorem1A}, we bound the dense coding capacity of the theory by 
\begin{equation}
\label{eq:G3}
\chi_{\text{DC}}(\Omega_{AB})\leq \log_2\bigl( 1+\max \{\lambda \hat{T}_\mu\cdot \Gamma\}\bigr).
\end{equation}

Fourth, since the matrix $\hat{T}_{\mu}$ is diagonal with entries either $1$ or $-1$, and the diagonal entries of $\Gamma$ are bounded by $-1$ and $1$, as follows from the bound $\lVert \gamma_k\rVert \leq 1$ in (\ref{eq:G2}), we obtain from (\ref{eq:G3}) that
\begin{equation}
\label{eq:G4}
\chi_{\text{DC}}(\Omega_{AB})\leq \log_2\bigl( 1+\lvert \lambda\rvert(2^N-1)\bigr).\\
\end{equation}

Finally, if we set $\bar{\lambda}_j\equiv \frac{1+2j}{2^N-1}$ for $j=0,1$, we obtain $\chi_{\text{DC}}(\Omega_{AB})\leq 2$ if $\lvert \lambda\rvert\leq\bar{\lambda}_1$, and $\chi_{\text{DC}}(\Omega_{AB})\leq 1$ if $\lvert \lambda\rvert\leq\bar{\lambda}_0$. Thus, it follows from the definitions of hyperdense coding and superdense coding, and the fact that the classical capacities of HSTs is $1$ bit that for the class of theories given above,
there cannot be hyperdense coding if $\lvert \lambda\rvert\leq\bar{\lambda}_1$, and there cannot be superdense coding if $\lvert \lambda\rvert\leq\bar{\lambda}_0$, as claimed.


%

\end{document}